\def\idty{{\leavevmode{\rm 1\ifmmode\mkern -4.8mu\else\kern -.3em\fi
      I}}}
\renewcommand{\Bbb}[1]{\if1#1\idty\else\mathbb{#1}\fi}
\newcommand{\kb}[1]{|#1\rangle\langle#1|}
\newcommand{\KB}[2]{|#1\rangle\langle#2|}
\newcommand{\ket}[1]{|#1\rangle}
\newtheorem{thm}{Theorem}[section]
\newtheorem{defi}[thm]{Definition}
\newtheorem{prop}[thm]{Proposition}
\newtheorem{kor}[thm]{Corollary}
\newenvironment{proof}{\par\noindent\textit{Proof.\ }}{\hfill $\Box$ \vspace{1em}}
\newtheorem{aX}{Axiom} %\renewcommand{\theaX}{E\Roman{aX}}
\title{Quantum control in infinite dimensions and Banach-Lie algebras: Pure point spectrum}
\author{Michael Keyl\\[1em]
{\small Dahlem Center for Complex Quantum Systems,}\\ {\small Freie Universität Berlin, 14195 Berlin, Germany}\\[1em]
{\small \texttt{michael.keyl@tum.de}}
}
\begin{document}

\maketitle

\begin{abstract}
  In finite dimensions, controllability of bilinear quantum control systems can be decided quite easily in terms of the ``Lie algebra rank condition'' (LARC), such that only the systems Lie algebra has
  to be determined from a set of generators. In this paper we study how this idea can be lifted to infinite dimensions. To this end we look at control systems on an infinite dimensional Hilbert space
  which are given by an unbounded drift Hamiltonian $H_0$ and bounded control Hamiltonians $H_1, \dots, H_N$. The drift $H_0$ is assumed to have empty continuous spectrum. We use recurrence methods
  and the theory of Abelian von Neumann algebras to develop a scheme, which allows us to use an approximate version of LARC, in order to check approximate controllability of the control system in
  question. Its power is demonstrated by looking at some examples. We recover in particular previous genericity results with a much easier proof. Finally several possible generalizations are
  outlined.  
\end{abstract}

\section{Introduction}
\label{sec:introduction}

One of the most fundamental, mathematical questions in quantum control theory is controllability: can one reach a particular target state, or implement a particular unitary gate (unitary operator) by
manipulating a given set of control vector fields (controls Hamiltonians). For closed quantum systems modeled on {\em finite} dimensional Hilbert spaces this question can be answered in a very
systematic way using appropriately assigned Lie subalgebras and the Lie-algebra rank condition (LARC) \cite{sussmann1972controllability,jurdjevic1972control,brockett1972system,
brockett1973lie,dirrhelmke2008}. The only mathematical desideratum are efficient ways to determine a Lie algebra from its generators. In this area, symmetry based
techniques led to quite complete solutions for spin systems and lattice Fermions \cite{zeier2011symmetry,zimboras2014dynamic,zimboras2015symmetry,zeier2015squares}. 

Moving from finite to infinite dimensions makes the situation substantially more difficult. One may compare it to changing from ordinary to partial differential equations. A fundamental observation
is that (in general) \emph{exact} controllability becomes impossible and has to be replaced by an \emph{approximate} version \cite{ball1982controllability}. In other words, states or unitary
operators can usually not be reached exactly but, if at all, only approximately (within a given topology). Even with this generalization challenging difficulties remain.  

One way of addressing them is to take the PDE picture seriously by casting the Schrödinger equation into the standard framework of PDE control and asking whether a wave function is approximately
reachable from a given initial state. Virtually all studies on infinite dimensional quantum systems treat the controllability problem within this wave function picture. Among the methods used are
adiabatic evolution \cite{adami2005controllability,boscain2012adiabatic,boscain2015approximate}, Lyapunov methods
\cite{mirrahimi2005lyapunov,nersesyan2009growth,nersesyan2010global,nersesyan2012global,%
  morancey2014global}, applications of the Nash-Moser implicit function theorem \cite{beauchard2005local,beauchard2006controllability}, 
and finally Lie algebraic methods in connection with finite dimensional subspaces which are either invariant
\cite{brockett2003controllability,rangan2004control,yuan2007controllability,bloch2010finite,KZSH,HoKe17,HeKe18} 
or, in an appropriate way, approximately invariant
\cite{chambrion2009controllability,boscain2012weak,chambrion2012periodic,boscain2014multi,boscain2015control,paduro2015approximate,paduro2015control,%
  chitour2016generic}. While most of the work is dedicated to systems the Hamiltonians of which possess a discrete spectrum, 
one of the few exceptions is \cite{beauchard2010controllability}.
  
Yet there is an alternative point of view: we can look at the outlined control problem as part of {\em operator theory}, where we are no longer interested in states (represented by wave functions or
density operators) but rather in the operator lift in terms of the structure of the underlying groups (and semigroups) generated by the unitary operators one can reach approximately from the identity
operator $\Bbb{1}$ . Closely related is again a Lie algebraic picture, now in terms of Banach-Lie algebras consisting of bounded operators and being generated -- possibly in an indirect way -- by
the Hamiltonians of the system.  The central task is to establish an appropriate generalization of the LARC and to discuss the 
limitations, arising from infinite dimensions.

The purpose of this paper is to make first steps into this direction. To this end we present in Section \ref{sec:setup} a general setup, together with a candidate for an approximate version of LARC in
infinite dimensions. There are two technical problems connected to the implementation of this new definition (cf. Problem 1 and 2 in Sect. \ref{sec:setup} below). In general they block a
straightforward application of the new condition within controllability proofs. In the special case of a control system where only the drift Hamiltonian $H_0$ can be unbounded (and the control
Hamiltonians $H_1, \dots H_N$ are bounded), with a spectrum consisting only of (not necessarily isolated) eigenvalues we will show that both problems can be avoided or circumvented. This will happen
in Sects.\ref{sec:recurrence} and \ref{sec:maximal-torus}. Crucial roles in this context are played by recurrence methods (Sect. \ref{sec:recurrence}) and the theory of Abelian von Neumann algebras
(Sect. \ref{sec:maximal-torus}). Within the described limitations of the systems under consideration the new techniques allow a wide range of approximate controllability proof just by calculating
commutators between bounded operators. The only two ingredients which are new -- compared  to finite dimensions -- are strong convergence arguments (which are usually easy to handle), and some 
spectral analysis of the unbounded drift  $H_0$ (depending on $H_0$ this can of course be difficult). In Sect. \ref{sec:contr-theor} the new scheme is demonstrated by recovering earlier results from
\cite{boscain2012weak,boscain2014multi,caponigro2018exact} with a much easier proof. For systems where our assumptions fail (e.g. if some of the control Hamiltonian are unbounded, too), generalizations
of the proposed schemes are necessary. Some possible directions are outlines in Sect. \ref{sec:generalizations}. The paper closes with an outlook in Sect. \ref{sec:outlook}.   

\section{Setup}
\label{sec:setup}

Let us start with a separable Hilbert space $\mathcal{H}$, the corresponding von Neumann algebra $\mathcal{B}(\mathcal{H})$ of bounded operators, the group
$\mathrm{U}(\mathcal{H})$ of unitary elements of $\mathcal{B}(\mathcal{H})$, and its Lie algebra $\mathfrak{u}(\mathcal{H})$ consisting of bounded, anti-selfadjoint
operators. We equip all spaces with the strong topology, and this turns $\mathrm{U}(\mathcal{H})$ into a topological group \cite{schottenloher2018unitary}. 
The central object we are interested in is the time-dependent Schrödinger equation  
\begin{equation} \label{eq:1}
  \frac{d}{dt} U(t,s) = -i H(t) U(t,s), \quad H(t) = H_0 + \sum_{j=1}^N u_j(t) H_j,
\end{equation}
where $\Bbb{R}^2 \ni (t,s) \mapsto U(t,s) \in \mathrm{U(\mathcal{H})}$ is a unitary propagator, the $H_j$, $j=0,\dots,N$ are selfadjoint (possibly unbounded) operators on
$\mathcal{H}$, the $u_j : \Bbb{R} \rightarrow \Bbb{R}$, $j=1,\dots, N$ are piecewise constant control functions, and the time differential has to be understood in the
strong sense. For $j=1,\dots,N$ the $H_j$ are called \emph{control Hamiltonians}, while $H_0$ is denoted the \emph{drift}. Furthermore, the following two assumptions are
required
\begin{enumerate}[leftmargin=*]
\item
  The operators $H_j$, $j=0,\dots,N$ admit a joint dense domain $D$, and on $D$ all linear combinations
  \begin{equation} \label{eq:4}
    H({\mathbf{y}}) = H_0 + \sum_{j=1}^N y_j H_j, \quad \mathbf{y} = (y_1,\dots,y_N) \in \Bbb{R}^N,
  \end{equation}
  are essentially selfadjoint. By the Kato-Rellich Theorem \cite[Thm X.12]{RESI2} this requirement is automatically satisfied, if all control Hamiltonians $H_j$, $j\geq
  1$ are bounded. In that case $D$ can be chosen as the domain of $H_0$, and all $H({\mathbf{y}})$ are selfadjoint on it (i.e. not just \emph{essentially}
  selfadjoint). Boundedness of the $H_j$ for $j>0$ is an important assumption in Sect. \ref{sec:recurrence} and \ref{sec:contr-theor}.   
\item
  The ranges $u_j(\Bbb{R})$ of the functions $u_j$, $j=1,\dots,N$ are finite sets, and the inverse images $u_j^{-1}(y)$ are for all $y \in \Bbb{R}$ and all $j=1,\dots,N$
  either empty or a union of finitely many intervals. Furthermore, we assume that the supports of all $u_j$ are contained in an interval $[0,T]$. The number $T \in
  \Bbb{R}$ is called the \emph{control time.}
\end{enumerate}
With the given assumptions it is easy to see that the propagator $U$ exists and is unique. It is even possible to express it in terms of exponentials. To see the latter
note first that we can find a partition $a = t_0 < t_1 < \dots < t_M = b$ of the interval $[a,b]$ such that all $u_j$ are constant on the subintervals
$(t_{k-1},t_k]$, $k=1, \dots, M$. The latter property is shared by the Hamiltonians $H(t)$ and therefore we get
\begin{equation} \label{eq:2}
  U(a,b) = \exp\bigl(i\tau_1 H(\mathbf{y}_1)\bigr) \dots \exp\bigl(i\tau_M H(\mathbf{y}_M)\bigr),\quad \tau_k = t_k - t_{k-1},\quad
  \mathbf{y}_k = \bigl(u_1(t_k), \dots, u_N(t_k)\bigr)\,.
\end{equation}
Also note that outside of the interval $[0,T]$ the time evolution is given alone by the drift, i.e. $U(a,0) = \exp\bigl(iaH_0\bigr)$ and $U(b,T) =
\exp\bigl(i(b-T)H_0\bigr)$ for $a < 0$ and $b > T$. This defines the propagators $U(a,b)$ explicitly for all $a,b \in \Bbb{R}$, and we can use them together with Eq.  (\ref{eq:2}) to introduce a
number of additional objects. 

\begin{defi} \label{def:1}
  On a separable Hilbert space $\mathcal{H}$ consider selfadjoint operators $H_0,\dots,H_N$ such that all the linear combination $H(\mathbf{y})$ from Eq. (\ref{eq:4}) are
  essentially selfadjoint on a common dense domain $D$. We define
  \begin{enumerate}[leftmargin=*]
  \item
    $\mathcal{R}_0(H_0;H_1,\dots,H_N)$ as the smallest subsemigroup of $\mathrm{U}(\mathcal{H})$ containing all unitaries $\exp\bigl(i \tau H(\mathbf{y})\bigr)$ with
    $\tau > 0$ and $\mathbf{y} \in \Bbb{R}^N$. It is called the \emph{algebraically reachable set.}
  \item
    The strong closure of $\mathcal{R}_0(H_0;H_1,\dots,H_N)$  is denoted by $\mathcal{R}(H_0;H_1, \dots, H_N)$ and called the \emph{strongly reachable set}.
  \item
    Similarly we define the \emph{dynamical group} $\mathcal{G}(H_0, \dots, H_1)$ as the smallest, strongly closed subgroup of $\mathrm{U}(\mathcal{H})$ containing all
    unitaries $\exp(i \tau H(\mathbf{y}))$ with $\tau \in \Bbb{R}$ and $\mathbf{y} \in \Bbb{R}^N$.
    \emph{dynamical group}. 
  \item
    The \emph{restricted dynamical group} $\hat{\mathcal{G}}(H_0, \dots, H_N)$ is the smallest, strongly closed subgroup of $\mathrm{U}(\mathcal{H})$ containing all
    $\exp(itH_j)$ for all $t \in \Bbb{R}$ and all $j=0,\dots,N$.
  \end{enumerate}
  If confusion can be avoided, we frequently drop the arguments and write $\mathcal{R}_0$, $\mathcal{R}$, $\mathcal{G}$ and $\hat{\mathcal{G}}$.
\end{defi}

\noindent Using the reasoning of Eq. (\ref{eq:2}) we see that $\mathcal{R}_0$ can be defined alternatively as the set of all unitaries arising as propagators $U(T,0)$ for a certain
control time and with appropriate control functions. This is the reason for the name ``reachable set''. It contains all unitaries which can be reached by following the
time evolution (\ref{eq:1}) from the initial condition $U(0,0) = \Bbb{1}$ up to time $T$.

The relationship between $\mathcal{G}$ and $\hat{\mathcal{G}}$ is a bit tricky. It is unclear whether one of the inclusions $\mathcal{G} \subset \hat{\mathcal{G}}$ or
$\hat{\mathcal{G}} \subset \mathcal{G}$ or even equality holds. The conjecture for all three cases is: ``no''. Promising sources for counterexamples are pairs of
unbounded operators $A$, $B$ with a commutator $[A,B]$ which vanishes on a joint core of $A,B$, but with non-commuting time evolutions $\exp(itA)$, $\exp(itB)$;
cf. \cite[Sect. VIII.5]{RESI1}. An important difference between $\hat{\mathcal{G}}$ and $\mathcal{G}$ is the scope of the definition: For $\hat{\mathcal{G}}$ to be well
defined selfadjointness of the $H_j$, $j=0,\dots,N$ is completely sufficient. We do not need the additional assumption on the linear combinations $H(\mathbf{y})$ which
are required for $\mathcal{G}$. Fortunately, with a sufficiently ``regular'' setup the difference $\mathcal{G}$ and $\hat{\mathcal{G}}$ vanishes.

\begin{prop} \label{prop:1}
  Assume that all the operators $H(\mathbf{y})$, $y \in \Bbb{R}^N$ from Eq. (\ref{eq:4}) are defined and selfadjoint on the same domain $D$. Then $\hat{\mathcal{G}} =
  \mathcal{G}$ holds. 
\end{prop}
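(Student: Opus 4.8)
The plan is to establish the two inclusions $\mathcal{G}\subseteq\hat{\mathcal{G}}$ and $\hat{\mathcal{G}}\subseteq\mathcal{G}$ separately. Since both $\mathcal{G}$ and $\hat{\mathcal{G}}$ are strongly closed subgroups of $\mathrm{U}(\mathcal{H})$, it suffices in each case to produce the defining generators of the one group as strong limits of products of the generators of the other. The generators of $\mathcal{G}$ are the $\exp\bigl(i\tau H(\mathbf{y})\bigr)$, those of $\hat{\mathcal{G}}$ the $\exp(itH_j)$, and the whole argument rests on moving between these two families by Trotter products and by a scaling limit.

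For $\mathcal{G}\subseteq\hat{\mathcal{G}}$ I would recover each $\exp\bigl(i\tau H(\mathbf{y})\bigr)$ from the one–parameter groups $\exp(itH_j)$ via the Trotter--Kato product formula \cite[Thm.~VIII.31]{RESI1}, which asserts
\[
  \bigl(e^{i\tau H_0/n}\,e^{i\tau y_1 H_1/n}\cdots e^{i\tau y_N H_N/n}\bigr)^{n}\ \longrightarrow\ e^{i\tau H(\mathbf{y})}
\]
strongly as $n\to\infty$. The point that makes the hypotheses transparent is that every partial sum $H_0+\sum_{k\le m}y_kH_k$ is itself of the form $H(\mathbf{y}')$ and hence, by assumption, selfadjoint on $D$; one therefore applies the two–operator formula inductively with $A=H_0+\sum_{k<m}y_kH_k$ and $B=y_mH_m$, whose sum is essentially selfadjoint on $\mathcal{D}(A)\cap\mathcal{D}(B)=D$. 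Each factor on the left lies in $\hat{\mathcal{G}}$, the products lie in the group $\hat{\mathcal{G}}$, and the strong limit stays in the strongly closed $\hat{\mathcal{G}}$; this closes the induction at $H(\mathbf{y})$.

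For $\hat{\mathcal{G}}\subseteq\mathcal{G}$ the generator $\exp(itH_0)=\exp\bigl(itH(\mathbf{0})\bigr)$ already lies in $\mathcal{G}$, so only $\exp(itH_j)$ with $j\ge 1$ remains. Here I would use the scaling family $H(s\mathbf{e}_j)=H_0+sH_j$ and rescale time, setting
\[
  B_s:=\tfrac1s\,H(s\mathbf{e}_j)=\tfrac1s H_0+H_j,\qquad e^{itB_s}=e^{\,i(t/s)\,H(s\mathbf{e}_j)}\in\mathcal{G}.
\]
Each $B_s$ is selfadjoint on $D$ and $B_s\psi=\tfrac1s H_0\psi+H_j\psi\to H_j\psi$ for every $\psi\in D$. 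By the strong resolvent convergence criterion \cite[Thm.~VIII.25]{RESI1} this gives $B_s\to H_j$ in the strong resolvent sense, hence $e^{itB_s}\to e^{itH_j}$ strongly; since $\mathcal{G}$ is strongly closed, $\exp(itH_j)\in\mathcal{G}$, completing the inclusion.

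The hard part is the hypothesis of the strong resolvent convergence theorem: $D$ must be a \emph{common core} for all $B_s$ \emph{and} for the limit $H_j$. For the $B_s$ this is free, since $D$ is their exact domain, but that $D$ is a core for each individual $H_j$ is the genuinely delicate point, and every natural route to the inclusion (scaling, or a Trotter difference realising $H_j=H(\mathbf{e}_j)-H_0$) turns out to require exactly this. When the control Hamiltonians are bounded --- the situation relevant in Sects.~\ref{sec:recurrence} and \ref{sec:contr-theor} --- it holds trivially, as any dense subspace is a core for a bounded operator. In the general case I would argue from the standing assumption: since $H_0$ and $H_0+sH_j$ are both selfadjoint with the \emph{same} domain $D$, the closed graph theorem makes their graph norms comparable on $D$, so that $H_j$ is $H_0$-bounded, and selfadjointness of $H_0+sH_j$ on $D$ for \emph{all} $s$ forces the relative bound to vanish; the core property for $H_j$ is then extracted from this relative-boundedness bookkeeping. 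This is the only step beyond the standard interplay of Trotter products, strong resolvent limits, and the closedness of the two groups.
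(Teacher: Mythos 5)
Your argument for $\mathcal{G}\subseteq\hat{\mathcal{G}}$ is essentially the paper's: a Trotter product formula recovers $\exp\bigl(i\tau H(\mathbf{y})\bigr)$ from the $\exp(itH_j)$, and the hypothesis that every $H(\mathbf{y})$ is selfadjoint on the common domain $D$ is exactly what makes the iterated two-operator formula \cite[Thm.~VIII.30]{RESI1} applicable; your induction over partial sums is just a more careful bookkeeping of the same idea. For the reverse inclusion you genuinely diverge: the paper writes $H_j=H_{e_j}-H_0$ and invokes Trotter once more, whereas you rescale, $e^{i(t/s)H(se_j)}=e^{itB_s}$ with $B_s=s^{-1}H_0+H_j\to H_j$ pointwise on $D$, and pass to the limit via strong resolvent convergence \cite[Thm.~VIII.25]{RESI1}. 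The two routes are of comparable difficulty and, as you correctly observe, they hinge on exactly the same hypothesis: the Trotter version that is actually applicable here is \cite[Thm.~VIII.31]{RESI1} (since the domain of $H_j$ may strictly contain $D$, so $H_j$ restricted to $D$ is at best \emph{essentially} selfadjoint), and that theorem, like Thm.~VIII.25, requires $D$ to be a core for $H_j$. Your write-up has the virtue of making this requirement explicit; the paper's proof assumes it silently.

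Where your proposal falls short is in the attempt to actually derive the core property from the standing assumptions. The closed-graph argument does show that $H_j=H_{e_j}-H_0$ is $H_0$-bounded on $D$, but the claim that selfadjointness of $H_0+sH_j$ on $D$ for every $s$ ``forces the relative bound to vanish'' is asserted rather than proved -- Kato--Rellich gives a \emph{sufficient} condition for selfadjointness of the sum on $D(H_0)$, not a necessary one, so a positive relative bound produces no immediate contradiction -- and even granting a vanishing relative bound, the passage to ``$D$ is a core for $H_j$'' is not a standard implication and cannot be dismissed as bookkeeping. That final paragraph should therefore be regarded as a conjecture, not a proof. In fairness, the paper's own proof carries the same unaddressed assumption, and in the regime actually used later (bounded $H_j$ for $j\ge 1$, Sects.~\ref{sec:recurrence} and \ref{sec:contr-theor}) the core property is trivial, as you note; there both arguments are complete.
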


\begin{proof}
  To prove $\mathcal{G} \subset \hat{\mathcal{G}}$, we have to show that for all $\mathbf{y} = (y_1,\dots,y_N) \in \Bbb{R}^N$ the operators
  \begin{equation}
    \exp\bigl(itH(\mathbf{y})\bigr) = \exp(itH_0 + ity_1H_1 + \dots + ity_N H_N)
  \end{equation}
  can appear as a strong limit of products $\exp(it_1H_1) \dots \exp(itH_N)$. However, this follows easily from Trotter's formula \cite{RESI1}[Thm. VIII.30], which we can
  apply since all involved operators are defined and selfadjoint on the same domain. For the other inclusion $\hat{\mathcal{G}} \subset \mathcal{G}$ note first that
  $\exp(itH_0) \in \mathcal{G}$ always holds, since $H_0 = H(\mathbf{y})$ with $\mathbf{y} = 0$. It remains to show that we have $\exp(itH_j) \in \mathcal{G}$ for
  $j=1,\dots,N$, too. But due to $H_j = H_{e_j} - H_0$ with the $j^{\mathrm{th}}$ element $e_j$ of the canonical basis $e_1, \dots, e_N \in \Bbb{R}^N$, we can again
  use a Trotter argument.
\end{proof}

\noindent If we look at weaker versions of Trotter's equation (e.g. \cite[Thm. VIII.31]{RESI1}) we see that there is room for generalizations. However, the necessary conditions on
the relationships of the domains of the $H(\mathbf{y})$ are cumbersome (to say the least). On the other hand, in this paper we are mostly interested in models where the
control Hamiltonians $H_j$, $j=1,\dots,N$ are bounded. In that case the assumptions of the proposition are met; cf. the corresponding remarks above.

To proceed we need strong continuity of the exponential map $\exp : \mathfrak{u}(\mathcal{H}) \rightarrow \mathrm{U}(\mathcal{H})$. Although, this seems to be a well
known fact, a published proof seems to be unavailable. Therefore we provide one. The idea was suggested by Neeb \cite{neebPC}.

\begin{prop}
  The exponential map  $\exp: \mathfrak{u}(\mathcal{H}) \rightarrow   \mathrm{U}(\mathcal{H})$ is continuous with respect to the strong topologies on
  $\mathfrak{u}(\mathcal{H})$ and $\mathrm{U}(\mathcal{H})$. 
\end{prop}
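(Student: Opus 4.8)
The plan is to reduce the difference of two exponentials to an integral over a single parameter and to exploit that the exponentials are unitary, hence norm-preserving. Concretely, for $A,B \in \mathfrak{u}(\mathcal{H})$ the fundamental theorem of calculus applied to $s \mapsto \exp(sA)\exp((1-s)B)$ gives
\[
  \exp(A) - \exp(B) = \int_0^1 \exp(sA)\,(A-B)\,\exp\bigl((1-s)B\bigr)\,ds,
\]
where the integrand is a strongly continuous $\mathcal{B}(\mathcal{H})$-valued function of $s$ and the integral is understood in the strong sense. Applying this to a vector $\psi \in \mathcal{H}$ and using that each $\exp(sA)$ is unitary yields the basic estimate
\[
  \bigl\|\bigl(\exp(A)-\exp(B)\bigr)\psi\bigr\| \le \int_0^1 \bigl\|(A-B)\,\exp\bigl((1-s)B\bigr)\psi\bigr\|\,ds.
\]
In the special case $B=0$ this collapses to the clean, unconditional bound $\|(\exp(A)-\Bbb 1)\psi\| \le \|A\psi\|$, which already settles strong continuity at the origin without any further hypothesis.

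For continuity at a general point $A$ I would take a strongly convergent net $A_\alpha \to A$, set $B = A$ in the estimate above, and try to show that the right-hand side tends to $0$. For each fixed $s$ the vector $\phi_s := \exp((1-s)A)\psi$ is fixed, so strong convergence $A_\alpha \to A$ gives $\|(A_\alpha - A)\phi_s\| \to 0$ pointwise in $s$; moreover $s \mapsto \phi_s$ is continuous, so the orbit $\{\phi_s : s \in [0,1]\}$ is compact. The remaining task is to interchange the limit with the $s$-integration.

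The main obstacle is precisely this interchange. Since the strong topology does not control the operator norm, the integrand is not automatically dominated, and pointwise-in-$s$ strong convergence need not be uniform on the compact orbit $\{\phi_s\}$. The key analytic input is the uniform boundedness principle: along a strongly convergent sequence $A_n \to A$ the family $\{A_n\psi\}_n$ is bounded for every $\psi$, whence $\sup_n \|A_n\| < \infty$, so the integrand is dominated by the constant $(\sup_n\|A_n\| + \|A\|)\|\psi\|$ and Lebesgue's dominated convergence theorem applies, giving $\|(\exp(A_n)-\exp(A))\psi\| \to 0$. Equivalently, the norm bound upgrades pointwise strong convergence to uniform convergence on the compact set $\{\phi_s\}$ via a finite $\varepsilon$-net. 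I expect the delicate point to be exactly the passage from strong convergence to the norm bound needed for domination, i.e.\ the fact that the approach to $A$ may be taken within a norm-bounded set; this is immediate for sequences and is the reason the estimate is arranged so that the unitary left factor carries no norm penalty.
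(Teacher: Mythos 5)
Your Duhamel estimate
\[
  \bigl\|\bigl(\exp(A)-\exp(B)\bigr)\psi\bigr\| \;\le\; \int_0^1 \bigl\|(A-B)\exp\bigl((1-s)B\bigr)\psi\bigr\|\,ds
\]
is correct, and the case $B=0$ indeed gives strong continuity at the origin for arbitrary nets. But at a general point your argument only establishes \emph{sequential} strong continuity, and that is not the statement: as the paper observes in the first line of its own proof, the strong topology on $\mathfrak{u}(\mathcal{H})$ is not first countable, so continuity has to be verified on nets. You have correctly located the delicate point --- the need for a norm bound on the approximating family to dominate the integrand --- but the resolution you propose, the uniform boundedness principle, is precisely the tool that is unavailable here: a strongly convergent net $A_\alpha\to A$ need not satisfy $\sup_\alpha\|A_\alpha\psi\|<\infty$ for fixed $\psi$ (only a $\psi$-dependent tail of the net is controlled, and no single tail need be pointwise bounded), so one cannot conclude $\sup_\alpha\|A_\alpha\|<\infty$. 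The paper states this explicitly: strongly convergent nets, unlike sequences, are not necessarily bounded. Without that bound, neither the domination of the integrand nor the upgrade to uniform convergence on the compact orbit $\{\phi_s\}$ goes through, and the interchange of limit and integral is unjustified. Note also that the weaker sequential (or bounded-net) statement would not suffice for the paper's later use of this proposition, namely the strong closedness of $\mathfrak{g}$, where an arbitrary convergent net in $\mathfrak{g}$ must be handled.

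The paper's proof circumvents exactly this obstacle by never estimating the unbounded net directly: it passes to the resolvents $(H_\lambda\pm i\Bbb{1})^{-1}$, which lie in the unit ball \emph{by construction}, shows via the second resolvent identity that strong convergence $H_\lambda\to H$ forces strong convergence of the resolvents, and then runs the standard ``strong resolvent convergence implies strong convergence of the functional calculus'' argument of Reed--Simon Thm.~VIII.20(b). That argument survives the passage from sequences to nets because every operator family appearing in it (resolvents, polynomials in them evaluated on a fixed norm-bounded set, and the unitaries $\exp(iH_\lambda)$ themselves) is uniformly norm bounded. To repair your approach you would need a substitute for uniform boundedness --- for instance, first proving strong continuity of $A\mapsto(A\pm i\Bbb{1})^{-1}$ and recovering $\exp$ from the resolvent, which essentially reproduces the paper's route. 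As written, your argument proves strong continuity of $\exp$ on norm-bounded subsets of $\mathfrak{u}(\mathcal{H})$ and at the origin, but not the proposition as stated.
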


\begin{proof}
  On $\mathfrak{u}(\mathcal{H})$ the strong topology is not first countable. Therefore convergent sequences are not sufficient to prove continuity. We
  have to use nets, instead. Hence consider an directed set $(I, \leq)$ and over it a net $H_\lambda$, $\lambda \in I$ of bounded selfadjoint
  operators converging strongly to $H \in \mathcal{B}(\mathcal{H})$. Furthermore, we assume that $H$ and all $H_\lambda$ are selfadjoint, i.e. $iH_\lambda, iH \in \mathfrak{u}(\mathcal{H})$. 

  If the net $H_\lambda$ is in addition bounded, i.e. $\|H_\lambda\| \leq K$ for all $\lambda \in I$ we can show, with a simple induction argument that the net $P(H_\lambda)$ defined in terms of an  
  arbitrary polynomial $P$ in one variable is strongly convergent, too. All we need is the estimate (with $\psi \in \mathcal{H}$)
  \begin{equation}
    \| (H_\lambda^n - H^n) \psi\| \leq \| H_\lambda^{n-1}(H_\lambda - H)\psi\| + \| (H_\lambda^{n-1} - H^{n-1})H\psi\|.
  \end{equation}
  If $\| (H_\lambda -H)\psi\| < \epsilon/(2K)$ and $\|(H_\lambda^{n-1} - H^{n-1})H\psi\|< \epsilon/2$ holds, we get $\| (H_\lambda^n - H^n)\| <
  \epsilon$ and by induction on $n$ we see that $H_\lambda^n \rightarrow H^n$ strongly as claimed. By taking linear combinations we can extend the
  result to polynomials.

  Unfortunately, strongly convergent nets are (in contrast to strongly convergent sequences) not necessarily bounded. It is therefore convenient to
  look at resolvent operators. For all selfadjoint, (bounded) operators $X$ we have $\|X\pm i \Bbb{1}\| \geq 1$ hence $\|(H_\lambda \pm
  i\Bbb{1})^{-1}\| \leq 1$. In other words the nets of resolvents are bounded. To see that they are strongly convergent, too, look at
  \begin{equation}
    (H_\lambda \pm i\Bbb{1})^{-1} - (H \pm i\Bbb{1})^{-1} = (H_\lambda \pm i \Bbb{1})^{-1} (H - H_\lambda) (H \pm i\Bbb{1})^{-1}.
  \end{equation}
  By assumption we know that $(H - H_\lambda) (H \pm i\Bbb{1})^{-1}\psi \rightarrow 0$ holds for all $\psi \in \mathcal{H}$. Hence, since $\|(H_\lambda \pm
  i \Bbb{1})^{-1}\| \leq 1$ we get $(H_\lambda \pm i\Bbb{1})^{-1} \rightarrow (H \pm i\Bbb{1})^{-1}$ strongly as claimed.

  Now we can replace in the proof of Thm. VIII.20(b) in \cite{RESI1} the sequence by a strongly convergent net and all arguments remain valid (cf. in
  particular the above remark on polynomials). Therefore we can conclude that $\exp(iH_\lambda) \rightarrow \exp(iH)$ holds strongly, and this proves
  strong continuity of $\exp$.
\end{proof}

\noindent $\mathcal{G}$ is in general not a Lie group -- neither in the strong nor in the uniform topology. Nevertheless, we can associate a Lie algebra to it. This is done in the
next proposition.

\begin{prop}
  Consider the assumptions from Def. \ref{def:1} and in particular the dynamical group $\mathcal{G}$. The equation
  \begin{equation}
    \mathfrak{g}(H_0,\dots,H_N) = \{ X \in \mathfrak{u}(\mathcal{H})\,|\, \exp(tX) \in \mathfrak{g}(H_0,\dots,H_N),\ \forall t \in \Bbb{R}\},
  \end{equation}
  defines a strongly closed Lie subalgebra of $\mathfrak{u}(\mathcal{H})$, which is called the \emph{dynamical Lie algebra}. As with $\mathcal{G}$ we just write
  $\mathfrak{g}$ without arguments if confusion can be avoided.
\end{prop}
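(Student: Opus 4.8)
First I read the right-hand side of the displayed equation as $\mathcal{G}(H_0,\dots,H_N)$ rather than $\mathfrak{g}(H_0,\dots,H_N)$, so that $\mathfrak{g}$ is the set of all $X\in\mathfrak{u}(\mathcal{H})$ whose one-parameter group $t\mapsto\exp(tX)$ lies entirely in $\mathcal{G}$. The plan is to verify the three defining properties in turn: that $\mathfrak{g}$ is a real linear subspace, that it is closed under the commutator, and that it is strongly closed. The essential tools are the product formulas of Trotter and of commutator type, which here hold without any domain subtleties because every element of $\mathfrak{u}(\mathcal{H})$ is bounded, together with the facts that $\mathcal{G}$ is a strongly closed subgroup and that $\exp$ is strongly continuous (by the previous proposition). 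The conceptual point to keep in mind is that $\mathcal{G}$ need not be a Lie group, so I cannot simply invoke the standard correspondence between Lie groups and Lie algebras; each closure property must instead be produced explicitly from these formulas.

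For the linear structure, closure under real scalars is immediate: if $\exp(tX)\in\mathcal{G}$ for all $t$ and $\lambda\in\Bbb{R}$, then $\exp(t\lambda X)=\exp\bigl((t\lambda)X\bigr)\in\mathcal{G}$, so $\lambda X\in\mathfrak{g}$. For additivity I would use the Trotter product formula, which for bounded $X,Y$ gives $\exp\bigl(t(X+Y)\bigr)=\lim_{n}\bigl(\exp(tX/n)\exp(tY/n)\bigr)^n$ with convergence even in operator norm, hence in particular strongly. For $X,Y\in\mathfrak{g}$ each factor lies in $\mathcal{G}$, so every finite product on the right lies in the group $\mathcal{G}$; since $\mathcal{G}$ is strongly closed the limit lies in $\mathcal{G}$ as well, for every $t$, whence $X+Y\in\mathfrak{g}$.

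For the bracket I would first observe that $[X,Y]$ is again anti-selfadjoint whenever $X,Y$ are, so $[X,Y]\in\mathfrak{u}(\mathcal{H})$. I would then invoke the commutator product formula, again valid in norm for bounded operators, which together with scalar closure yields $\exp\bigl(t[X,Y]\bigr)=\exp\bigl([tX,Y]\bigr)=\lim_{n}\bigl(\exp(tX/n)\exp(Y/n)\exp(-tX/n)\exp(-Y/n)\bigr)^{n^2}$. Exactly as before, each finite product lies in $\mathcal{G}$, so the strong limit lies in the strongly closed set $\mathcal{G}$ and therefore $[X,Y]\in\mathfrak{g}$.

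Finally, for strong closedness I would argue with nets, since --- as recalled in the preceding proof --- the strong topology on $\mathfrak{u}(\mathcal{H})$ is not first countable. Let $X_\mu$ be a net in $\mathfrak{g}$ converging strongly to $X\in\mathfrak{u}(\mathcal{H})$. For fixed $t$ we have $tX_\mu\to tX$ strongly, and by the strong continuity of $\exp$ we get $\exp(tX_\mu)\to\exp(tX)$ strongly; since each $\exp(tX_\mu)\in\mathcal{G}$ and $\mathcal{G}$ is strongly closed, $\exp(tX)\in\mathcal{G}$ for every $t$, i.e. $X\in\mathfrak{g}$. I expect the main obstacle to be bookkeeping rather than hard analysis: the product formulas are classical and unproblematic precisely because all operators are bounded, and the only genuinely infinite-dimensional ingredients are the consistent use of the net formulation together with the strong continuity of $\exp$ and the strong --- as opposed to norm --- closedness of $\mathcal{G}$.
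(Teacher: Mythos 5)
Your proposal is correct and follows essentially the same route as the paper: the Trotter product formula for additivity, the commutator product formula for closure under the bracket, and a net argument combined with the strong continuity of $\exp$ and the strong closedness of $\mathcal{G}$ for closedness of $\mathfrak{g}$. Your reading of the right-hand side as $\mathcal{G}(H_0,\dots,H_N)$ is also the intended correction of the typo in the statement.
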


\begin{proof}
  Two bounded operators $A, B$ on $\mathcal{H}$ satisfy the following two equations
  \begin{gather}
    \lim_{n \rightarrow \infty} \left[\exp\left(\frac{A}{n}\right) \exp\left(\frac{B}{n}\right)\right]^n = \exp(A+B) \label{eq:5}   \\
    \lim_{n \rightarrow \infty} \left[\exp\left(\frac{A}{n}\right) \exp\left(\frac{B}{n}\right)\exp\left(-\frac{A}{n}\right) \label{eq:6}
      \exp\left(-\frac{B}{n}\right)\right]^{n^2} = \exp\bigl([A,B]\bigr) .
  \end{gather}
  Eq. (\ref{eq:5}) is the Trotter product formula \cite{RESI1}[Thm. VIII.30] already used in a more general setting in
  Prop. \ref{prop:1}. Eq. (\ref{eq:6}) is probably a bit less well known and called commutator formula. The infinite dimensional version presented here can
  be shown as in finite dimensions by using the Baker-Campbell-Hausdorff formula \cite[Prop. 3.4.7]{hilgert2011structure}. Now we use Eq. (\ref{eq:5})
  to show that $\mathfrak{g}$ is a linear subspace of $\mathfrak{u}(\mathcal{H})$ and Eq. (\ref{eq:6}) that it is even a Lie subalgebra. To show
  closedness in the strong topology we need the corresponding continuity of the exponential map. Consider a net $X_\lambda \in \mathfrak{g}$, $\lambda
  \in I$ which is strongly convergent to $X \in \mathfrak{u}(\mathcal{H})$. We  have to show that $X \in  \mathfrak{g}$. By assumption we have
  $\exp(tX_\lambda) \in \mathcal{G}$ for all $t$ and all $\lambda \in I$. Furthermore the scaled nets $tX_\lambda$ are still convergent with limit
  $tX$. By the continuity of $\exp$ we therefore get $\exp(tX) \in \mathcal{G}$ for all $t$ and the definition of $\mathfrak{g}$ shows that $X \in
  \mathfrak{g}$ holds. 
\end{proof}

\noindent We are now ready to state the main problem we want to solve with this paper. The following definition of strong controllability is taken form
\cite{KZSH} and \cite{HoKe17}, but is was introduced earlier in \cite{boscain2012weak} under the name: \emph{approximate simultaneous
  controllability}. 

\begin{defi}
  The control system (\ref{eq:1}) is called \emph{exactly operator controllable} if $\mathcal{R}_0 = \mathrm{U}(\mathcal{H})$ holds and \emph{strongly
    operator controllable} if $\mathcal{R}(H_0,H_1,\dots,H_N) = \mathrm{U}(\mathcal{H})$ is satisfied.
\end{defi}

\noindent The idea behind this definition is that all unitaries on $\mathcal{H}$ can appear as (strong approximations of) solutions $U(0,T)$ to the differential
equation (\ref{eq:1}) with initial data $U(0,0)=\Bbb{1}$ and appropriately chosen control functions. In the language of physics, we can rephrase this as: all unitary ``gates'' can be realized in a
good approximation (with respect to the strong topology) by controlling the Schrödinger equation (\ref{eq:1}) appropriately.

In finite dimensions it can be shown that exact controllability is equivalent to the Lie algebra rank condition (LARC) which says that
\begin{equation} \label{eq:7}
  \langle iH_0, \dots, iH_N\rangle_{\mathrm{Lie},\Bbb{R}} = \mathfrak{u}(\mathcal{H})
\end{equation}
should hold. Here the left hand side denotes the Lie algebra generated by the given operators. In infinite dimensions exact controllability can
usually not be achieved \cite{ball1982controllability}. Therefore we have to look for approximate concepts, and with the definition of strong operator controllability we have already cared about
that. What is more difficult to find is an appropriate replacement for Eq. (\ref{eq:7}).
If all the $H_0, \dots, H_N$ are bounded, we can replace $\langle iH_0, \dots, iH_N\rangle_{\mathrm{Lie},\Bbb{R}}$ with the smallest strongly closed Lie subalgebra of $\mathfrak{u}(\mathcal{H})$
containing all the $H_0, \dots, H_N$. This leads to an approximate version of LARC which implies $\mathcal{G} = \mathrm{U}(\mathcal{H})$. If we also
have $\mathcal{R} = \mathcal{G}$ the proof is completed. However, there are two major problems with this reasoning: 
\begin{itemize}[leftmargin=*]
\item \textbf{Problem 1:} If some of the $H_j$ are unbounded, Eq. (\ref{eq:7}) can never hold. Even if the Lie algebra $\langle iH_0, \dots,
  iH_N\rangle_{\mathrm{Lie},\Bbb{R}}$ is well defined (this requires a dense invariant domain for all $H_0, \dots, H_N$) it is not a subalgebra of
  $\mathfrak{u}(\mathcal{H})$, since the latter only contains bounded operators. Furthermore, there is no obvious replacement for the right hand side of (\ref{eq:7}), which could repair this
  defect. For a universal condition which is applicable to all possible choices of drift and control Hamiltonians we would need a Lie algebra which contains all anti-selfadjoint, bounded or unbounded
  operators. But the corresponding set is not even a vector space. 
  One way to solve this problem is to replace in $\langle iH_0, \dots, iH_N\rangle_{\mathrm{Lie},\Bbb{R}}$ the unbounded $H_j$ with bounded ``replacement
  generators''. we might need infinitely many (or even uncountable many) of them -- like the set of spectral projections of the unbounded $H_j$.
\item \textbf{Problem 2:} The relation $\mathcal{R} = \mathcal{G}$ is in general wrong. If $\mathcal{R}$ is a proper subset of $\mathcal{G}$ (or if this
  is at least a possibility we can not exclude) knowledge of $\mathcal{G}$ is useless for a controllability proof. A typical example where this
  pathological behavior occurs, is $N=0$ (ie. wee only look at a drift) together with the generator of translations $V_t: \psi \mapsto
  \psi(\,\cdot\, - t)$ in $\mathrm{L}^2(\Bbb{R})$ as the drift $H_0$. The only solution (if we want to follow strategies which are based on an
  analysis of $\mathfrak{g}$) is to find conditions which guarantee that $\mathcal{R} = \mathcal{G}$ still holds.
\end{itemize}
In the following we will address both problems for the case where all the control Hamiltonians (i.e. $H_j$ with $j > 0$) are bounded, and the spectrum
of $H_0$ (which is unbounded) only consists of eigenvalues (which are not necessarily isolated). This is done for Problem 1 in
Sect. \ref{sec:maximal-torus} and for Problem 2 in Sect. \ref{sec:recurrence}. The section afterwards demonstrate the usage of the new scheme with a
particular example. Note that the assumptions just given are on the one hand quite restrictive, but already comprises lots of physical models from
quantum optics and atomic physics (e.g. bound systems with controlled, bounded potentials). Therefore the results we are going to present are of
relevance not only mathematically but also in a physical context.

\section{Recurrence}
\label{sec:recurrence}

In finite dimension a one parameter unitary group $\exp(itK)$ with selfadjoint generator $K \in \mathcal{B}(\mathcal{H})$ always revisits its own past
-- either exactly (if the group is periodic) or at least approximately. The latter means that for all $t_- < 0$ and all $\epsilon > 0$ there is a $t_+
> 0$ with
\begin{equation} \label{eq:8}
  \| \exp(it_-K) - \exp(it_+K)\| < \epsilon \, .
\end{equation}
This statement can be easily proved from two simple facts: 1. If the eigenvalues $\lambda_k$, $k=1,\dots,\dim(\mathcal{H})$ of $K$ are of the form
$\lambda_k = 2\pi q_k$ with $q_k \in \Bbb{Q}$ the group $\exp(itK)$ is periodic. 2. In the general case the eigenvalues can be approximated with
arbitrary precision by numbers of the given rational form. If we go to infinite dimensions this statement can be generalized without big effort if the
spectrum of $K$ consists only of eigenvalues. We only have to replace the norm approximation from Eq. (\ref{eq:8}) by a strong approximation. Recall
in this context that a strong neighborhood base of a unitary $V$ in $\mathrm{U}(\mathcal{H})$ is given by
\begin{equation} \label{eq:9}
  \mathcal{N}(V;\psi_1, \dots, \psi_M; \epsilon) = \{ W \in \mathrm{U}(\mathcal{H})\, | \, \|W\psi_k - V\psi_k\| < \epsilon \ \forall k=1,\dots,M\},
\end{equation}
with $\psi_k \in \mathcal{H}$, $k=1,\dots,M$, $\epsilon > 0$. Results of this form are available from several authors
(e.g. \cite{wallace2015recurrence,bliss2014quantum}). To keep this paper more self consistent, we will give a proof, nevertheless. The following is
taken from \cite{HoKe17}.  

\begin{prop} \label{prop:2}
  Consider a selfadjoint (possibly unbounded) operator $K$ on $\mathcal{H}$, which has only eigenvalues in its spectrum (not necessarily
  isolated). For all $t_- \in \Bbb{R}$, $t_- \leq 0$ and all strong neighborhoods $\mathcal{N}$ of $\exp(i t_- H_X)$ in the unitary group
  $\mathrm{U}(\mathcal{H})$ of $\mathcal{H}$, there is a time $t_+ \in \Bbb{R}$, $t_+ > 0$ with $\exp(i t_+ K) \in \mathcal{N}$.   
\end{prop}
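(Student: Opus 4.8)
The plan is to diagonalize $K$ using its pure point spectrum and thereby reduce the claim to a finite-dimensional recurrence statement on a torus. Since $K$ is selfadjoint with spectrum consisting only of eigenvalues, $\mathcal{H}$ admits an orthonormal basis $(e_n)_{n\in\mathbb{N}}$ of eigenvectors, $K e_n = \lambda_n e_n$ with $\lambda_n\in\mathbb{R}$, so that $\exp(itK)$ acts diagonally by $e_n \mapsto e^{it\lambda_n} e_n$. A strong neighborhood $\mathcal{N}$ of $\exp(it_-K)$ is, by (\ref{eq:9}), specified by finitely many vectors $\psi_1,\dots,\psi_M$ and an $\epsilon>0$, so it suffices to find $t_+>0$ with $\|\exp(it_+K)\psi_k - \exp(it_-K)\psi_k\| < \epsilon$ for $k=1,\dots,M$.

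Expanding $\psi_k = \sum_n c_{k,n} e_n$, I would write
\[
\|\exp(it_+K)\psi_k - \exp(it_-K)\psi_k\|^2 = \sum_n \bigl|e^{it_+\lambda_n} - e^{it_-\lambda_n}\bigr|^2\,|c_{k,n}|^2,
\]
and split the sum at a cutoff $L$. Because $\sum_n |c_{k,n}|^2<\infty$ for each of the finitely many $k$, I can choose $L$ so large that the tail $\sum_{n>L}|c_{k,n}|^2$ is smaller than $\epsilon^2/8$ for every $k$ at once; since $|e^{it_+\lambda_n}-e^{it_-\lambda_n}|^2\le 4$, the tail then contributes less than $\epsilon^2/2$ irrespective of $t_+$. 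This uniform control of the tail over the finitely many test vectors is exactly the step that forces the strong (rather than the norm) topology, and I expect it to be the conceptual heart of the matter.

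For the head $n\le L$ it remains to make $|e^{it_+\lambda_n}-e^{it_-\lambda_n}| = |e^{is\lambda_n}-1|$ small, with $s := t_+-t_-$, uniformly for $n\le L$; here I would demand $|e^{is\lambda_n}-1|<\epsilon'$ with $\epsilon'$ chosen so that $\epsilon'^2\max_k\|\psi_k\|^2 < \epsilon^2/2$. This is a simultaneous approximation problem on the torus $\mathbb{T}^L$: I must find $s$ making $(e^{is\lambda_1},\dots,e^{is\lambda_L})$ close to the identity. The one remaining subtlety is that $t_+=t_-+s>0$ forces $s>|t_-|$, so I need a recurrence time that is not merely positive but large. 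I would obtain it by a pigeonhole argument: fixing a step $\Delta>|t_-|$ and sampling $(e^{ij\Delta\lambda_n})_{n\le L}$ for $j=1,2,\dots$, compactness of $\mathbb{T}^L$ forces two samples at $j\Delta$ and $k\Delta$ with $j<k$ to lie within $\epsilon'$ of each other; translation invariance of the torus metric then yields $|e^{i(k-j)\Delta\lambda_n}-1|<\epsilon'$ for all $n\le L$, with recurrence time $s=(k-j)\Delta\ge\Delta>|t_-|$.

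Putting the pieces together, I would set $t_+ = t_- + s > 0$ and add the head bound ($<\epsilon^2/2$) to the tail bound ($<\epsilon^2/2$) to conclude $\|\exp(it_+K)\psi_k-\exp(it_-K)\psi_k\|<\epsilon$ for all $k$, i.e.\ $\exp(it_+K)\in\mathcal{N}$. The torus recurrence is classical (it is the Kronecker--Weyl phenomenon, equivalently the relative density of the Bohr almost-periods of $s\mapsto(e^{is\lambda_n})_{n\le L}$), so the only genuinely infinite-dimensional input is the tail truncation; the only genuinely new points compared with the finite-dimensional statement (\ref{eq:8}) are the passage from norm to strong closeness and the need to guarantee that the return time exceeds $|t_-|$.
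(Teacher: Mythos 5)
Your proof is correct and follows essentially the same route as the paper: truncate to the span of finitely many eigenvectors using the square-summability of the coefficients of the finitely many test vectors $\psi_1,\dots,\psi_M$, and then solve a finite-dimensional recurrence problem on a torus. The only difference is that you establish the finite-dimensional recurrence (and the constraint $t_+>0$) self-containedly via a pigeonhole argument on $\mathbb{T}^L$ with step $\Delta>|t_-|$, whereas the paper invokes its Eq.\ (\ref{eq:8}), justified by rational approximation of the eigenvalues and periodicity; your variant is arguably cleaner since it avoids the issue that the norm distance between $\exp(itK)$ and the rationally perturbed group grows with $t$.
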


\begin{proof}
  We can assume without loss of generality that $\mathcal{N}$ has the form given in Eq. (\ref{eq:9}). Now let us consider a complete basis $\phi_n$,
  $n \in \Bbb{N}$ of eigenvectors of $K$ with eigenvalues $\lambda_n = \langle\phi_n, K \phi_n\rangle$. Furthermore $N \in \Bbb{N}$ 
  is chosen such that $\| (\Bbb{1} - P_N) \psi_j\| \leq \epsilon/3$ holds for the projection $P_N$ onto the span of
  $\phi_1, \dots, \phi_N$. Since $P_N\mathcal{H}$ is invariant under $K$, the latter defines a one-parameter group of
  unitaries $\tilde{U}_t = P_N \exp(itK) P_N$ on $P_N\mathcal{H}$. Since $P_N\mathcal{H}$ is finite dimensional we can apply
  Eq. (\ref{eq:8}). Hence for $t_- < 0$ and $\epsilon > 0$ there a $t_+ > 0$ with $\| \tilde{U}_{t_+} - \tilde{U}_{t_-}\| < \epsilon/3$ in the
  operator norm.  Now the statement follows from  
  \begin{align}
    \|\exp(i t_+ K) \psi_j - \exp(i t_- K) \psi_j\| &\leq \|(\tilde{U}_{t_+} - \tilde{U}_{t_-}) P_N \psi_j\| + \| (\exp(it_+ K) - \exp(it_- K))(\Bbb{1} -
    P_N)\psi_j\| \\
    &\leq \frac{1}{3} + \| \exp(it_+ K) \| \| (\Bbb{1} - P_N)\psi_j\| + \| \exp(it_- K) \| \| (\Bbb{1} - P_N)\psi_j\| \leq \epsilon .
  \end{align} 
\end{proof}

\noindent Now we look at the control problem (\ref{eq:1}) with $H_0$ unbounded, and pure point spectrum, and $H_k$, $k > 0$ all bounded. We can apply the
previous proposition to $H_0$ and see that the whole one-parameter group $\exp(itH_0)$ is contained in the strongly reachable set
$\mathcal{R}$. Together with a simple Trotter argument we can show that $\mathcal{R}$ and $\mathcal{G}$ coincide under the given condition.

\begin{prop} \label{prop:3}
  Assume that the spectrum of $H_0$ is pure point and all $H_k$, $k > 0$ are bounded. Then we have $\mathcal{R} = \mathcal{G}$.
\end{prop}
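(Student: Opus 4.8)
The plan is to prove the two inclusions separately. The inclusion $\mathcal{R}\subseteq\mathcal{G}$ is immediate: every generator $\exp(i\tau H(\mathbf{y}))$ of $\mathcal{R}_0$ with $\tau>0$ lies in $\mathcal{G}$, and since $\mathcal{G}$ is a strongly closed group -- in particular a strongly closed semigroup -- it contains $\mathcal{R}_0$ and hence its strong closure $\mathcal{R}$. For the reverse inclusion I would first invoke Proposition \ref{prop:1}: since all $H_j$, $j\ge 1$, are bounded its hypotheses are met and $\mathcal{G}=\hat{\mathcal{G}}$ holds, so $\mathcal{G}$ is the smallest strongly closed subgroup containing the one-parameter groups $\exp(itH_j)$, $t\in\Bbb{R}$, $j=0,\dots,N$. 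It therefore suffices to show that each of these one-parameter groups lies in $\mathcal{R}$. Since each family $\{\exp(itH_j):t\in\Bbb{R}\}$ is stable under inversion, the semigroup $\mathcal{R}$ then contains the group they generate, and being strongly closed it contains its closure, which is exactly $\hat{\mathcal{G}}=\mathcal{G}$.

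The heart of the matter is thus to produce, for every $j$, the full one-parameter group $\exp(itH_j)$, $t\in\Bbb{R}$, inside $\mathcal{R}$, where the difficulty lies entirely in the negative time directions (the positive ones are already present in $\mathcal{R}_0$). For the drift $j=0$ this is exactly what recurrence delivers: by construction $\exp(itH_0)=\exp(itH(\mathbf{0}))\in\mathcal{R}_0$ for $t>0$, and applying Proposition \ref{prop:2} to $K=H_0$ -- legitimate because $H_0$ has pure point spectrum -- shows that every $\exp(it_-H_0)$ with $t_-\le 0$ lies in the strong closure of $\{\exp(it_+H_0):t_+>0\}\subseteq\mathcal{R}_0$, hence in $\mathcal{R}$. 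Thus $\exp(itH_0)\in\mathcal{R}$ for all $t\in\Bbb{R}$. Recurrence is indispensable here: since $H_0$ always enters $H(\mathbf{y})$ with the fixed positive coefficient $\tau$, there is no algebraic way to reach $\exp(-itH_0)$, and the pure-point hypothesis is precisely what makes the negative drift directions accessible.

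With the whole group $\exp(itH_0)$ available, a Trotter argument produces the bounded control directions in both signs. For $u>0$ I would use the identity $-uH_j = u(H_0-H_j)-uH_0$ together with the Trotter product formula (\ref{eq:5}),
\begin{equation*}
  \exp(-iuH_j)=\lim_{n\to\infty}\Bigl[\exp\bigl(i\tfrac{u}{n}H(-e_j)\bigr)\exp\bigl(-i\tfrac{u}{n}H_0\bigr)\Bigr]^n,
\end{equation*}
which applies since $H_0$ and $H_0-H_j$ are selfadjoint on the common domain $D(H_0)$ and their sum $-uH_j$ is bounded. Each factor $\exp(i\tfrac{u}{n}H(-e_j))$ lies in $\mathcal{R}_0$ because $u/n>0$, while $\exp(-i\tfrac{u}{n}H_0)\in\mathcal{R}$ by the previous step; the products, their powers, and finally the strong limit therefore all stay in the strongly closed semigroup $\mathcal{R}$. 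The crucial point is the choice $\mathbf{y}=-e_j$, which flips the sign of $H_j$ while leaving the drift with its mandatory positive sign, the latter then being cancelled by the recurrence-provided $\exp(-i\tfrac{u}{n}H_0)$. Combined with the positive-time case $\exp(iuH_j)\in\mathcal{R}$ (same argument with $\mathbf{y}=e_j$), this yields $\exp(isH_j)\in\mathcal{R}$ for all $s\in\Bbb{R}$ and all $j=1,\dots,N$, completing the list of generators and hence, by the reduction of the first paragraph, the inclusion $\mathcal{G}\subseteq\mathcal{R}$.

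I expect the main obstacle to be conceptual rather than computational: one must resist the temptation to obtain the inverses by applying recurrence directly to $H(\mathbf{y})$ or to the individual $H_j$. This fails, because a bounded perturbation of a pure-point operator may well have continuous spectrum (e.g.\ $H_0=0$ plus a discrete Laplacian), so Proposition \ref{prop:2} is simply not available for $H(\mathbf{y})$, and a bounded $H_j$ with continuous spectrum admits no strong recurrence at all. The whole argument hinges on recognizing that recurrence is needed only for the single operator to which it applies, namely the pure-point drift $H_0$, and that every negative control direction can then be assembled from positive-time reachable unitaries and $\exp(-itH_0)$ by Trotter, exploiting the sign freedom in the control amplitude $\mathbf{y}$. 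The only routine points left to check are the applicability of the Trotter formula (guaranteed by boundedness of the $H_j$ and the common domain $D(H_0)$) and the fact that $\mathrm{U}(\mathcal{H})$ is a topological group in the strong topology, so that closures of semigroups and of subgroups behave as used above.
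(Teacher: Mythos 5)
Your proof is correct and follows essentially the same route as the paper: recurrence (Prop.~\ref{prop:2}) supplies $\exp(itH_0)$ for $t<0$, and a Trotter decomposition pairing a positive-time $\exp\bigl(isH(\mathbf{y}')\bigr)$ with a recurrence-provided $\exp(-i\sigma H_0)$ synthesizes the missing negative-time exponentials inside the strongly closed semigroup $\mathcal{R}$. The only (cosmetic) difference is organizational: the paper writes $tH(\mathbf{y})=\tilde{t}H_0+sH(ts^{-1}\mathbf{y})$ and handles all generators of $\mathcal{G}$ at once, whereas you first pass to $\hat{\mathcal{G}}=\mathcal{G}$ via Prop.~\ref{prop:1} and treat each $H_j$ separately via $-uH_j=uH(-e_j)-uH_0$.
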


\begin{proof}
  We have to show that $\exp(i t H(\mathbf{y})) \in \mathcal{R}$ for all $\mathbf{y} \in \Bbb{R}^N$ and all $t \in \Bbb{R}$. By definition we already
know that this holds for all $t > 0$, and strong continuity of the one-parameter groups $\exp\bigl(itH(\mathbf{y})\bigr)$ shows that $\Bbb{1} =
s-\lim_{t\rightarrow 0} \exp(i t H_\mathbf{y}) \in \mathcal{R}$ holds as well. Hence assume $t < 0$ holds. We choose $s > 0$ and with $\tilde{t} = t -
s$ we get $t H(\mathbf{y}) = \tilde{t} H_0 + s H(ts^{-1} \mathbf{y})$. From Prop. \ref{prop:2} we see that $\exp\bigl(i\tilde{t}H_0\bigr) \in
\mathcal{R}$ holds and $\exp\bigl(isH(ts^{-1}\mathbf{y})\bigr) \in \mathcal{R}$ follows from the definition of $\mathcal{R}$. Furthermore, since the
$H_k$, $k>0$ are bounded, all $H(\mathbf{y})$ are selfadjoint on the domain $D$ of $H_0$ (cf. the corresponding remark in
Sect. \ref{sec:setup}). Hence, we can apply Trotter's product formula, and since $\mathcal{R}$ is strongly closed (by definition) we see that $\exp(t
H(\mathbf{y})) \in \mathcal{R}$ holds. Hence $\mathcal{R} = \mathcal{G}$ as stated.
\end{proof}

\begin{kor}
  Under the assumption from Prop. \ref{prop:3} we have $\mathcal{R} = \hat{\mathcal{G}}$.
\end{kor}

\begin{proof}
  This is follows immediately from Prop. \ref{prop:1} and \ref{prop:3}.
\end{proof}

\noindent For the given assumptions on the operators $H_k$, $k=0,\dots,N$ we have solved Problem 2 from the previous section. The next topic deals with
Problem 1.

\section{The maximal torus}
\label{sec:maximal-torus}

The topic of this section concerns the question: How can be find generators for $\mathfrak{g}$ if some of the $H_j$, $j=0,\dots,N$ are unbounded? The most
natural answer to this question is: Look at the spectral representation of the unbounded $H_k$. This is exactly what we will do. The first step is the
following definition.

\begin{defi} \label{def:2}
  Consider a separable Hilbert space $\mathcal{H}$ and a selfadjoint, possibly unbounded operator $K$ with spectral measure $E: \mathfrak{B}(\Bbb{R})
  \mapsto \mathcal{B}(\mathcal{H})$, where $\mathfrak{B}(\Bbb{R})$ denotes the Borel-$\sigma$-algebra of the real line. We associate to $K$ the
  Abelian von Neumann algebra
  \begin{equation}
    \mathcal{M}(K) = \{ E(\Delta) \,|\, \Delta \in \mathfrak{B}(\Bbb{R})\}'',
  \end{equation}
  where $(\,\cdot\,)'$ denotes (as usual) the commutant and $(\,\cdot\,)''$ is the double commutant. Furthermore we define the \emph{maximal torus} of $K$ as
  \begin{equation}
    \mathcal{T}(K) = \{ U \in \mathcal{M}(K)\,|\, UU^*=U^*U=\Bbb{1}\}.
  \end{equation}
\end{defi}

Abelian von Neumann algebras $\mathcal{M}$ acting on a separable Hilbert space $\mathcal{H}$ are very well understood. Only three different cases can
occur. Look first at $\mathcal{H}_c = \mathrm{L}^2([0,1])$ and the algebra of multiplication operators
\begin{equation}
  \mathcal{M}_c = \{M_f \, | \, f \in L^\infty([0,1]) \} \subset \mathcal{B}(\mathcal{H}_c),\quad \mathcal{H}_c \ni \psi \mapsto M_f\psi = f\psi \in \mathcal{H}_c.
\end{equation}
$\mathcal{M}_c$ is an Abelian (actually a \emph{maximal} Abelian) von Neumann algebra and the characteristic functions $\chi_\Delta$ belonging to
Borel subsets of $[0,1]$ give rise to all the projections $M_{\chi_\Delta}$ in $\mathcal{M}_c$. There are no \emph{minimal} projections. Another
possible case arises if we define for $n \in \Bbb{N}$ the Hilbert space $\mathcal{H}_n = \Bbb{C}^n$ and
\begin{equation}
  \mathcal{M}_n = \{ \mathrm{diag}(\lambda_1, \dots, \lambda_n)\,|\, (\lambda_1, \dots, \lambda_n) \in \Bbb{C}^n\} \subset \mathcal{B}(\mathcal{H}_n).
\end{equation}
Operators on $\mathcal{H}_n$ are just $n \times n$ matrices and $M_n$ consists of all \emph{diagonal} $n \times n$ matrices. Minimal projections are
the one-dimensional projections $E_k$ onto the elements $e_k$, $k=1,\dots,n$ of the canonical basis -- or in other words the diagonal matrices with a
one on the $k^{\mathrm{th}}$ element on the diagonal and zeros everywhere else. The final case arises if
we set $n=\aleph_0$. We get $\mathcal{H}_{\aleph_0} = \mathrm{l}^2(\Bbb{N})$ and
\begin{equation}
  \mathcal{M}_{\aleph_0} = \{ M_a \, | \, a = (a_n)_n \in \mathrm{l}^\infty(\Bbb{N})\} \subset \mathcal{B}(\mathcal{H}_{\aleph_0}),\quad
  \mathcal{H}_{\aleph_0} \ni b=(b_n)_n \mapsto M_ab = (a_nb_n)_n \in \mathcal{M}_{\aleph_0}.
\end{equation}
Hence $\mathcal{M}_{\aleph_0}$ consists of bounded sequences $(a_n)_{n \in \Bbb{N}}$. Again, the minimal projections $E_k$ arise from the canonical
basis $e_k \in \mathcal{H}_k$, $k \in \Bbb{N}$. Therefore
\begin{equation} \label{eq:13}
  E_k = M_{e_k} \in \mathcal{M}_{\aleph_0} \quad\text{with}\quad (E_k b)_j = \delta_{kj} b_j \quad\text{for}\quad b = (b_j)_j \in \mathcal{H}_{\aleph_0}.
\end{equation}
Now it can be shown \cite[Thm. 9.4.1]{kadison1997fundamentals}
that an Abelian von Neumann algebra is *-isomorphic to: $\mathcal{M}_c$, or $\mathcal{M}_n$ for some $n \in \Bbb{N} \cup \{\aleph_0\}$, or
$\mathcal{M}_c \otimes \mathcal{M}_n$.

If we look at $\mathcal{M}(K)$ the different cases are related to different spectral properties of $K$. We have $\mathcal{M}(K) \cong \mathcal{M}_c$
if $K$ has only continuous spectrum, $\mathcal{M}(K) \cong \mathcal{M}_n$ with $n \in \Bbb{N}$ if $K$ has only a pure point spectrum consisting of
exactly $n$ different eigenvalues, and $\mathcal{M}(K) \cong \mathcal{M}_{\aleph_0}$ if $K$ has again only a pure point spectrum, but its size is
countably infinite. In this paper we will look in particular at the latter case. Hence, let us denote the eigenprojections of $K$ by $F_k$, $k\in
\Bbb{N}$ (with no particular order). The von Neumann algebra  $\mathcal{M}(K)$ can now be written as (the proof of this statement is left as an exercise):
\begin{equation} \label{eq:3}
  \mathcal{M}(K) = \{ M_a\,|\, a \in \mathrm{l}^\infty(\Bbb{N})\},\quad M_a = s-\sum_{k=1}^\infty x_k F_k \quad a = (a_k)_k \in
  \mathrm{l}^\infty(\Bbb{N}),
\end{equation}
where $s-\sum$ stands for a strongly convergent series. A *-isomorphism $\Phi$ to $\mathcal{M}_{\aleph_0}$ can be defined by $\Phi(F_k) = E_k$. By
\cite[Thm. 2.4.23]{bratteli2012operator} *-morphisms between von Neumann algebras are $\sigma$-strongly continuous. Furthermore, strong and
$\sigma$-strong topologies coincide on the unit ball of $\mathcal{B}(\mathcal{H})$ for any Hilbert space $\mathcal{H}$
\cite[Prop. 2.4.1]{bratteli2012operator}. Hence, $\Phi$ defines a group isomorphism, which is at the same time a homeomorphism from $\mathcal{T}(K)$
to the standard torus 
\begin{equation} \label{eq:10}
  \mathcal{T}_{\aleph_0} = \{ U \in \mathcal{M}_{\aleph_0}\,|\, UU^* = U^*U = \Bbb{1} \}.
\end{equation}
Therefore, all statements about $\mathcal{T}_{\aleph_0}$ which only refer to its properties as a topological group carry over automatically to all
tori $\mathcal{T}(K)$ for operators with a pure point spectrum as described above. We are in particular not interested in the dimension of the
projections $F_k$. For operators with pure point spectrum this is only a minor advantage (it does not really matter in the following, whether we will
work with the projections $E_k$ or $F_k$). If we generalize the discussion of this paragraph to general selfadjoint operators, however, this is
different. In  particular the case of continuous spectrum can be reduced to an analysis of multiplication operators on $\mathrm{L}^2([0,1])$ which is 
a substantial simplification. This applies to all closedness results about subgroups and subsemigroups of the tori $\mathcal{T}(K)$, like
Thm. \ref{thm:1} below, or the discussion of recurrence from the last section.

Our next step associates to $\mathcal{T}(K)$ a Lie algebra and an exponential map. To this end consider a general (i.e. not necessarily bounded)
sequence $\mathbf{x} = (x_k)_k \in \Bbb{R}^{\aleph_0}$. We define
\begin{equation} \label{eq:11}
  D_{\mathbf{x}} = \left\{ \psi \in \mathcal{H} \,\Big| \sum_{k=1}^\infty  |\lambda_k|^2 \|F_k \psi\|^2 \leq \infty \right\}
\end{equation}
and
\begin{equation} \label{eq:12}
  X_{\mathbf{x}} : D_{\mathbf{x}} \rightarrow \mathcal{H}, \quad \psi \mapsto \sum_{k=1}^\infty F_k \psi\ .
\end{equation}
Spectral theorem \cite[Thm VIII.6]{RESI1} shows that $X_{\mathbf{x}}$ is well defined and selfadjoint on $D_{\mathbf{x}}$. By applying Stone's theorem
\cite[Thm VIII.6]{RESI1}, we see that $\Bbb{R} \ni t \mapsto \exp(itX_{\mathbf{x}}) \in \mathcal{T}(K) \subset \mathrm{U}(\mathcal{H})$ is a strongly
continuous, unitary one parameter group, and hence a continuous one-parameter subgroup of $\mathcal{T}(K)$. From the general form of operators in
$\mathcal{M}(K)$ given in Eq. (\ref{eq:3}), we see easily that \emph{all} such subgroups of $\mathcal{T}(K)$ are of this form.
\begin{equation}
  \mathfrak{T}(K) = \{ i X_{\mathbf{x}} \, | \, \mathbf{x} \in \Bbb{R}^{\aleph_0}\},
\end{equation}
is the Lie algebra of $\mathcal{T}(K)$ and the usual exponential
\begin{equation} \label{eq:14}
  \exp(i X_{\mathbf{x}}) \psi = s-\sum_{k=1}^\infty \exp(ix_k) F_k \psi, \quad \psi \in \mathcal{H},
\end{equation}
is the corresponding exponential map. Alternatively we can equip $\mathcal{T}(K)$ with the uniform topology and get a topological group again. The
continuous one-parameter subgroups are now \emph{uniformly} continuous, and therefore they are generated by the bounded elements of
$\mathfrak{T}(K)$. We denote the corresponding subspace by $\mathfrak{t}(K)$. It can be written alternatively as
\begin{align}
  \mathfrak{t}(K) &= \{i X_{\mathbf{x}}\,|\, \mathbf{x} \in \mathrm{l}^{\infty}(\Bbb{N})\} \\
  &= \text{strong closure of}\quad \mathrm{span}_{\Bbb{R}} \{i F_k\,|\, k \in \Bbb{N} \}, \label{eq:16}
\end{align}
where $\mathrm{span}_{\Bbb{R}}$ denotes the space of finite, \emph{real} linear combinations. Now we are ready for the main result of this section. It
deals with the question: What is the closure of $\{\exp(itX_{\mathbf{x}})\,|\, t \in \Bbb{R}\}$ in $\mathcal{T}(K)$?

\begin{thm} \label{thm:1}
  Consider a selfadjoint operator $K$ and its maximal torus $\mathcal{T}(K)$ as just discussed. Assume further that the eigenvalues $x_k$, $k \in
  \Bbb{N}$ of $K$ are rationally independent (i.e. linearly independent in $\Bbb{R}$ if the latter is regarded as a vector space over $\Bbb{Q}$). Then
  the closure of $\exp(itK)$ in $\mathcal{T}(K)$ coincides with $\mathcal{T}(K)$. 
\end{thm}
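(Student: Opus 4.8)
The plan is to transport everything through the homeomorphism $\Phi$ constructed just before the theorem and then reduce to the classical Kronecker theorem. Since $\Phi : \mathcal{T}(K) \to \mathcal{T}_{\aleph_0}$ is a homeomorphism of topological groups for the strong topologies, it maps the orbit $\{\exp(itK)\mid t\in\Bbb R\}$ onto its image and carries closures to closures. Because $\Phi(F_k)=E_k$ and $\Phi$ is $\sigma$-strongly continuous while the series in (\ref{eq:14}) converges strongly, the image of $\exp(itK)$ is exactly the sequence $(e^{itx_k})_k\in\mathcal{T}_{\aleph_0}$. Thus it suffices to prove that $\{(e^{itx_k})_k \mid t\in\Bbb R\}$ is strongly dense in $\mathcal{T}_{\aleph_0}$.

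The crucial topological observation is that on the unit ball of $\mathcal{B}(\mathrm{l}^2(\Bbb N))$, and hence on $\mathcal{T}_{\aleph_0}$, the strong topology coincides with coordinatewise convergence: for uniformly bounded nets $M_{a^\lambda}$, strong convergence is tested on the total family $e_j$, and $\|M_{a^\lambda}e_j - M_a e_j\| = |a_j^\lambda - a_j|$. Hence $\mathcal{T}_{\aleph_0}$ is just the compact product group $\Bbb T^{\Bbb N}$ with its product topology, and a basic strong neighbourhood of a target $(w_k)_k$ constrains only finitely many coordinates, say $k=1,\dots,N$. Equivalently, starting from the general neighbourhood (\ref{eq:9}) one first replaces each test vector $\psi_j$ by $P_N\psi_j$, paying an error $\le 2\|(\Bbb 1 - P_N)\psi_j\|$ that is harmless by uniform boundedness of unitaries (exactly as in the proof of Prop. \ref{prop:2}). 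Density in $\Bbb T^{\Bbb N}$ is therefore equivalent to the assertion that for every $N$, every $(w_1,\dots,w_N)\in\Bbb T^N$ and every $\epsilon>0$ there is a $t\in\Bbb R$ with $|e^{itx_k}-w_k|<\epsilon$ for all $k\le N$.

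This last statement is precisely Kronecker's (Weyl's) density theorem for the one-parameter line $t\mapsto(e^{itx_1},\dots,e^{itx_N})$ on the finite torus $\Bbb T^N$, whose closure equals $\Bbb T^N$ if and only if $x_1,\dots,x_N$ are linearly independent over $\Bbb Q$. Since $x_1,\dots,x_N$ form a finite subfamily of the rationally independent family $(x_k)_{k\in\Bbb N}$, the hypothesis is met and density holds for every $N$. I would include the short character-theoretic proof of the finite case: a character $\mathbf m\in\Bbb Z^N$ is trivial on the orbit iff $e^{it\sum_k m_k x_k}=1$ for all $t$, i.e. iff $\sum_k m_k x_k=0$, which forces $\mathbf m=0$ by rational independence; hence the closed subgroup generated by the orbit has trivial annihilator and equals $\Bbb T^N$. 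Combining the three steps gives $\overline{\{(e^{itx_k})_k\}}=\mathcal{T}_{\aleph_0}$, and pulling back through $\Phi$ yields the claim.

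The number-theoretic content is entirely classical and poses no difficulty; the genuine work is the topological reduction. The point to get right is that the strong topology on the maximal torus is the product topology, so every basic neighbourhood involves only finitely many spectral coordinates and the infinite problem collapses onto a sequence of finite Kronecker problems. One subtlety to handle carefully is that the eigenprojections $F_k$ may have arbitrary (even infinite) multiplicity, so the reduction must be performed at the level of the abstract torus $\mathcal{T}_{\aleph_0}$ through $\Phi$ rather than in terms of a fixed eigenbasis of $K$ — which is exactly what the isomorphism established before the theorem buys us. Alternatively one could bypass the finite-$N$ reduction and argue in one step by Pontryagin duality: the dual of the compact group $\Bbb T^{\Bbb N}$ is $\bigoplus_{k}\Bbb Z$, the annihilator of the closed orbit-subgroup consists of the finitely supported $\mathbf m$ with $\sum_k m_k x_k=0$, and rational independence makes this annihilator trivial, forcing the closure to be all of $\Bbb T^{\Bbb N}$.
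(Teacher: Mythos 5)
Your proposal is correct and follows essentially the same route as the paper's proof: reduce via the isomorphism $\Phi$ to the standard torus $\mathcal{T}_{\aleph_0}$, cut down to finitely many spectral coordinates (your product-topology observation is just a cleaner packaging of the paper's explicit $E^{[n]}$-truncation estimate), and invoke Kronecker's theorem on the resulting finite torus. The character-theoretic and Pontryagin-duality remarks are a nice self-contained substitute for the citation of Kronecker's theorem but do not change the structure of the argument.
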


\begin{proof}
  We use the fact that it is sufficient to look at the group $\mathcal{T}_{\aleph_0}$ from (\ref{eq:10}) and the operator $X_{\mathbf{x}}$ on
  $\mathcal{H}_{\aleph_0}$; cf. Eqs. (\ref{eq:11}) and (\ref{eq:12}) with $F_k$ replaced by $E_k$ from Eq. (\ref{eq:13}). In other words, we ignore
  the dimension of the $F_k$, and replace these projections with the one-dimensional $E_k$. Now define $E^{[n]} = \sum_{k=1}^n E_k$ for $n \in
  \Bbb{N}$. We get a projection onto the $n$-dimensional subspace $E^{[n]}\mathcal{H}_{\aleph_0}$. An arbitrary element $V \in \mathcal{T}_{\aleph_0}$
  is given by $V = s-\sum_{k=1}^\infty \exp(2 \pi i \lambda_k) E_k$ with $\pmb{\lambda}  = (\lambda_k)_k \in \Bbb{R}^{\aleph_0}$ and therefore it
  commutes with $E^{[n]}$ and we get a unitary $V^{[n]} =  E^{[n]} V E^{[n]}$ on $E^{[n]} \mathcal{H}_{\aleph_0}$. From $\mathbf{x} \in
  \Bbb{R}^{\aleph_0}$ we derive $\hat{\mathbf{x}} = (2\pi)^{-1} \mathbf{x}$, and since the $x_k$ is by assumption algebraically independent the same
  is true for the $\hat{x}_k$. Projecting with $E^{[n]}$ we get a selfadjoint operator $K^{[n]}$ on $E^{[n]} \mathcal{H}_{\aleph_0}$
  \begin{equation}
    K^{[n]} = E^{[n]} K E^{[n]} = \sum_{k=1}^n 2\pi \hat{x}_k E_k.
  \end{equation}
  By Kronecker's Theorem \cite[Thm 7.9]{apostol1976kronecker}, this implies that we can find for all $\delta > 0$ a real
  number $t \in \Bbb{R}$ and integers $y_1, \dots, y_n$ such that
  \begin{equation}
    |t \hat{x}_k - y_k - \lambda_k| < \delta \quad \forall i=1,\dots, n
  \end{equation}
  holds. Applying this to $\exp(i t K^{[n]})$ and $V^{[n]}$ using the continuity of the exponential map we see that for all $\epsilon > 0$ we get $t
  \in \Bbb{R}$ with
  \begin{equation} \label{eq:15}
     \| \exp(it K^{[n]}) - V^{[n]} \| \leq  \sum_{k=1}^n \left| \exp(2\pi i t \hat{x}_k ) \exp( 2 \pi i y_k) - \exp(2 \pi i \lambda_k) \right| <
     \frac{\epsilon}{2}, 
   \end{equation}
   since the $y_k \in \Bbb{Z}$ and therefore $\exp(2 \pi i y_k) = 1$. for an arbitrary $\psi \in \mathcal{H}_{\aleph_0}$ with $\|\psi=1\|$ we find $n
   \in \Bbb{N}$ such that $\|(\Bbb{1} - E^{[n]}) \psi\| < \epsilon/4$. If $t \in \Bbb{R}$ is chosen such that Eq. (\ref{eq:15}) holds we get
   \begin{align}
     \left \| \bigl(\exp(itK)  - V\bigr) \psi \right\| &\leq \left\| \bigl(\exp(itK)  - V\bigr) E^{[n]} \psi \right \| + \left\| \bigl(\exp(itK)  -
                                                         V\bigr) \bigl(\Bbb{1} - E^{[n]}\bigr) \psi \right \|   \\
                                                       &\leq \left\| \bigl(\exp(itK^{[n]})  - V^{[n]}\bigr) E^{[n]} \psi \right \| + 2 \|\bigl(\Bbb{1}
                                                         - E^{[n]}\bigr) \psi\| < \frac{\epsilon}{2} + 2 \frac{\epsilon}{4} = \epsilon.
   \end{align}
   Hence, $V$ is in the strong closure of the group $\{ \exp(itK)\,|\, t \in \Bbb{R}\}$. Since $V \in \mathcal{T}_{\aleph_0}$ was arbitrary, this
   concludes the proof.
 \end{proof}

 We can now come back to Problem 1 from Sec. \ref{sec:setup}. If the drift Hamiltonian $H_0$ satisfies the assumptions from Thm. \ref{thm:1}, we see 
 from the definition of the dynamical group $\mathcal{G}$ that $\mathcal{T}(H_0) \subset \mathcal{G}$  holds, and therefore $\mathfrak{t}(H_0) \subset
 \mathfrak{g}$. Using Eq. (\ref{eq:15}) this implies $F_k \in \mathfrak{g}$. Hence, if in addition all the $H_j$ with $j>0$ are bounded we can conclude
 from (here $\overline{(\,\cdot\,)}^s$ denotes the strong closure)
 \begin{equation} \label{eq:18}
   \overline{\langle F_k, H_1, \dots H_N; k \in \Bbb{N} \rangle_{\mathrm{Lie},\Bbb{R}}}^s = \mathfrak{u}(\mathcal{H})
 \end{equation}
 that $\mathfrak{g} = \mathfrak{u}(\mathcal{H})$ is true. But this implies $\mathcal{G}=\mathrm{U}(\mathcal{H})$, and strong controllability follows
 from Prop. \ref{prop:3}. In the next Section we are looking at a special case, where this particular chain of arguments can be applied. 

\section{A controllability theorem}
\label{sec:contr-theor}

To turn the ideas from the end of the last section into a controllability proof, we need additional conditions on the control Hamiltonians. This is 
done in terms of a complete orthonormal system $\phi_k \in \mathcal{H}$, $k \in \Bbb{N}$. We associate to the set of operators $\{H_1, \dots, H_N\}$ a
graph $\Gamma$ with vertices $\mathrm{Vert}(\Gamma) = \Bbb{N}$ and edges 
\begin{equation} \label{eq:17}
  \mathrm{Edge}(\Gamma) = \{ v,w) \in \Bbb{N}\,|\, \exists j \in \{1, \dots, N\} \ \text{with}\ \langle\phi_v, H_l \phi_w\rangle \neq 0\}.
\end{equation}
A graph $\Gamma$ is called \emph{connected}, if for all pairs of vertices $v, \omega \in \mathrm{Vert}(\Gamma)$, $v \neq w$ there is a path which
connects $v$ with $w$; cf. \cite{diestel2006graph}. We use this construction to define (taken from \cite{boscain2012weak}):
\begin{defi}
  We say that a finite set $\mathcal{F} \subset \mathcal{B}(\mathcal{H})$ of bounded operators on a separable Hilbert $\mathcal{H}$ is connected with
  respect to a complete orthonormal system $\psi_k \in \mathcal{H}$, $k \in \Bbb{N}$ of $\mathcal{H}$, if the graph $\Gamma$ defined by
  $\mathrm{Vert}(\eta) = \Bbb{N}$ and Eq. (\ref{eq:17}) is connected.
\end{defi}
Now we can apply the methods introduced in the previous two sections. The only additional assumption we are using is the non-degeneracy
of the eigenvalues $H_0$. This condition can be removed easily, but then we need more conditions on the control Hamiltonians.

\begin{thm} \label{thm:2}
  Consider a separable Hilbert space $\mathcal{H}$ and the selfadjoint operators $H_0, \dots, H_N$. Furthermore assume:
  \begin{enumerate}
  \item
    $H_0$ can be bounded or unbounded, but has only pure point spectrum. The eigenvalues $x_k$, $k \in \Bbb{N}$ are non-degenerate and rationally
    independent.
  \item
    The operators $H_1, \dots, H_N$ are bounded and the set $\{H_1, \dots, H_N\}$ is connected with respect to the complete set of eigenvectors
    $\phi_k \in \mathcal{H}$, $k \in \Bbb{N}$ of $H_0$.
  \end{enumerate}
  The control system from Eq. (\ref{eq:1}) is strongly operator controllable.
\end{thm}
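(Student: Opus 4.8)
The plan is to feed the two hypotheses into the machinery assembled in Sections~\ref{sec:recurrence} and~\ref{sec:maximal-torus}, reducing the whole statement to the approximate LARC of Eq.~(\ref{eq:18}). First I observe that assumption~1 makes $H_0$ a pure point operator and assumption~2 makes every $H_j$, $j>0$, bounded, so Prop.~\ref{prop:3} applies and gives $\mathcal{R} = \mathcal{G}$. Hence it suffices to prove $\mathcal{G} = \mathrm{U}(\mathcal{H})$; since every unitary can be written as $\exp(iK)$ with $K$ bounded selfadjoint, this is equivalent to $\mathfrak{g} = \mathfrak{u}(\mathcal{H})$. Next, rational independence of the eigenvalues $x_k$ means $H_0$ satisfies the hypotheses of Thm.~\ref{thm:1}; together with $\exp(itH_0) \in \mathcal{G}$ and strong closedness of $\mathcal{G}$ this yields $\mathcal{T}(H_0) \subset \mathcal{G}$, hence $\mathfrak{t}(H_0) \subset \mathfrak{g}$, and by Eq.~(\ref{eq:16}) every eigenprojection lies in the dynamical Lie algebra, i.e. $iF_k \in \mathfrak{g}$. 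Non-degeneracy of the spectrum means each $F_k$ is the rank-one projection onto the eigenvector $\phi_k$.

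The heart of the argument is then to verify Eq.~(\ref{eq:18}), namely that the $iF_k$ together with the $iH_j$ generate a Lie algebra whose strong closure is all of $\mathfrak{u}(\mathcal{H})$. I would work in the eigenbasis and write $E_{vw} = \langle\phi_w,\cdot\,\rangle\phi_v$ for the matrix units, so that $F_k = E_{kk}$. For a fixed edge $(v,w)$ of $\Gamma$ pick $j$ with $\alpha := \langle\phi_v, H_j\phi_w\rangle \neq 0$. A direct computation of the double commutator gives
\[
  [iF_w,[iF_v,iH_j]] = i\alpha\, E_{vw} + i\bar\alpha\, E_{wv} =: Y_1 \in \mathfrak{g},
\]
all other entries of $H_j$ being annihilated by the two projections (this uses only $v \neq w$ and boundedness of $H_j$, so the brackets are genuine bounded-operator identities). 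Bracketing once more with $iF_v$ produces a second, linearly independent off-diagonal element
\[
  [iF_v,Y_1] = -\alpha\, E_{vw} + \bar\alpha\, E_{wv} =: Y_2 \in \mathfrak{g}.
\]
Because $\alpha \neq 0$, the real span of $Y_1,Y_2$ is the full two-real-dimensional space of anti-selfadjoint operators on the off-diagonal $(v,w)$ block; in particular it contains $i(E_{vw}+E_{wv})$ and $E_{vw}-E_{wv}$.

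With these block generators available for every edge, connectivity of $\Gamma$ lets me reach every pair by propagating along a path: brackets such as $[E_{vw}-E_{wv},\,E_{wu}-E_{uw}]$ yield generators on the $(v,u)$ block, which is exactly the standard finite-dimensional LARC computation showing that matrix units over a connected graph, together with the diagonal $iE_{kk}$, generate $\mathfrak{u}(P_n\mathcal{H})$ for the projection $P_n$ onto any finite set of basis vectors. Thus $\langle iF_k, iH_1,\dots,iH_N\rangle_{\mathrm{Lie},\Bbb{R}}$ contains every finite-rank anti-selfadjoint operator diagonalized in the basis $\phi_k$. Since $P_nXP_n \to X$ strongly for any bounded $X$ and $P_nXP_n$ is such a finite-rank anti-selfadjoint operator when $X \in \mathfrak{u}(\mathcal{H})$, the strong closure equals $\mathfrak{u}(\mathcal{H})$, which is Eq.~(\ref{eq:18}). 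As $\mathfrak{g}$ is strongly closed and contains the generators, we get $\mathfrak{g} = \mathfrak{u}(\mathcal{H})$, hence $\mathcal{G}=\mathrm{U}(\mathcal{H})$ and, via Prop.~\ref{prop:3}, $\mathcal{R}=\mathrm{U}(\mathcal{H})$, i.e. strong operator controllability.

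The step I expect to cost the most care is the algebraic generation behind Eq.~(\ref{eq:18}): one must check that the double commutator truly isolates the single matrix element $\alpha$ even though $H_j$ has in general infinitely many nonvanishing entries, and that path-propagation produces the \emph{anti-selfadjoint} generators rather than merely the complex matrix units --- this is where the bookkeeping of the phase of $\alpha$ and the distinctness of the indices along a path matter. The remaining ingredients (surjectivity of $\exp$ onto $\mathrm{U}(\mathcal{H})$, the strong convergence $P_nXP_n \to X$, and the inclusion $\mathcal{T}(H_0)\subset\mathcal{G}$) are routine given the results already established.
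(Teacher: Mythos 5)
Your proposal is correct and follows essentially the same route as the paper: reduce to $\mathcal{R}=\mathcal{G}$ via Prop.~\ref{prop:3}, obtain the eigenprojections $iF_k\in\mathfrak{g}$ from Thm.~\ref{thm:1}, and verify Eq.~(\ref{eq:18}) by double and triple commutators plus path-propagation along the connected graph. The only (cosmetic) difference is that you stay in the real Lie algebra $\mathfrak{u}(\mathcal{H})$ throughout, whereas the paper temporarily complexifies to work with the matrix units $\KB{\phi_v}{\phi_w}$ in $\mathcal{B}(\mathcal{H})$ and then restricts to anti-selfadjoint elements; your bookkeeping of the phase of $\alpha$ is, if anything, the more careful of the two.
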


\begin{proof}
  We can apply Prop. \ref{prop:3} to see that under the given assumptions $\mathcal{G} = \mathcal{R}$ holds. Hence it is sufficient to show that
  $\mathcal{G} = \mathrm{U}(\mathcal{H})$ is true, which in turn follows from $\mathfrak{g} = \mathfrak{u}(\mathcal{H})$. Now we use Thm. \ref{thm:1}
  and see that $\mathfrak{t}(H_0) \subset \mathfrak{g}$ is satisfied. Hence it is sufficient to check the validity of Eq. (\ref{eq:17}) from the end
  of the last section. To make the calculations simpler let us pass temporarily to complex Lie algebras. In other words: instead of Eq. (\ref{eq:18})
  we prove
  \begin{equation} \label{eq:20}
    \overline{\langle F_k, H_1, \dots H_N; k \in \Bbb{N} \rangle_{\mathrm{Lie},\Bbb{C}}}^s = \mathcal{B}(\mathcal{H}),
  \end{equation}
  and recover Eq. (\ref{eq:18}) by a restriction to anti-selfadjoint elements. By assumption the eigenvalues $x_k$ of $H_0$ are non-degenerate. Hence
  the $F_k$ are one dimensional, and project onto the (normalized) eigenvectors $\psi_k$. Using the usual ``ketbra'' notation from quantum physics,
  this can be written as $F_k = \kb{\phi_k}$. Now assume
  \begin{equation}
    \alpha = \langle\phi_v, H_l \phi_w\rangle = \overline{ \langle\phi_v, H_l \phi_w\rangle} \neq 0
  \end{equation}
  for some $v,w = 1,\dots, N$ and $v \neq w$. We can easily compute the double commutators $[F_v,[H_l,F_w]]$ and get
  \begin{equation}
    [F_w,[H_l,F_v]] = \alpha (\KB{\phi_v}{\phi_w} + \overline{\alpha} \KB{\phi_w}{\phi_v}.
  \end{equation}
  Taking another commutator with $F_v$ leads to
  \begin{equation}
    [F_v, [F_w,[H_l,F_v]]] = \alpha (\KB{\phi_v}{\phi_w} - \overline{\alpha} \KB{\phi_w}{\phi_v}.
  \end{equation}
  Taking linear combinations we get
  \begin{equation} \label{eq:19}
    \KB{\phi_v}{\phi_w}, \KB{\phi_v}{\phi_w} \in \langle F_k, H_1, \dots H_N; k \in \Bbb{N} \rangle_{\mathrm{Lie},\Bbb{C}}
  \end{equation}
  Now we use the assumption on connectedness of the set $\{H_1, \dots, H_N\}$. This implies that we can find for any pair $v,w \in \Bbb{N}$, $v\neq w$
  a sequence $(k_1,\dots,k_M) \in \Bbb{N}^M$ with $v=k_1$, $w=k_M$ and
  \begin{equation}
    \forall j \in \{1,\dots,M-1\} \ \exists l \in \{1, \dots, N\} \quad\text{with}\quad \langle \phi_{k_j}, H_l \phi_{k_{j+1}}\rangle \neq 0 \, .
  \end{equation}
  With Eq. (\ref{eq:19}) we see that
  \begin{equation}
    \KB{\phi_{k_j}}{\phi_{k_{j+1}}} \in \langle F_k, H_1, \dots H_N; k \in \Bbb{N} \rangle_{\mathrm{Lie},\Bbb{C}} \quad \forall j=1,\dots, M-1
  \end{equation}
  holds. Taking more commutators leads to
  \begin{equation}
    \KB{\phi_v}{\phi_w} = \bigl[\KB{\phi_{k_1}}{\phi_{k_{2}}}, \bigl[ \KB{\phi_{k_2}}{\phi_{k_{3}}}, \dots,  \KB{\phi_{k_{M-1}}}{\phi_{k_{M}}}\bigr] \dots
    \bigr] \in \langle F_k, H_1, \dots H_N; k \in \Bbb{N} \rangle_{\mathrm{Lie},\Bbb{C}}\, .
  \end{equation}
  Since $\kb{\phi_v} = F_v \in \langle F_k, H_1, \dots H_N; k \in \Bbb{N} \rangle_{\mathrm{Lie},\Bbb{C}}$ this implies
  \begin{equation}
    \KB{\phi_v}{\phi_w} \in \langle F_k, H_1, \dots H_N; k \in \Bbb{N} \rangle_{\mathrm{Lie},\Bbb{C}} \quad \forall (v,w) \in \Bbb{N}^2.
  \end{equation}
  The strong closure of the span of the $\KB{\phi_v}{\phi_w}$ coincide with $\mathcal{B}(\mathcal{H})$. Hence, we get Eq. (\ref{eq:20}) and with the
  above reasoning, the statement is proved. 
\end{proof}

This theorem recovers older result from \cite{boscain2012weak}, but with an easier and more transparent proof. Please note that more general result
are in the meantime available \cite{boscain2014multi,caponigro2018exact} which do not fit into the path paved by Thm. \ref{thm:1}; i.e. the assumption
on rationally independent eigenvalues are not met. This is not necessarily a problem for the Lie algebraic methods proposed in this paper, because a
number of possible generalizations are available. This is the topic of the next section.

\section{Generalizations}
\label{sec:generalizations}

Let us add some remarks how we can proceed if the assumptions from Thm. \ref{thm:2} are not satisfied. The most easily handled generalization arises if the
eigenvalues of $H_0$ are still rationally independent but no longer non-degenerate. In that case the projections $F_k$ are not one-dimensional, but
Thm. \ref{thm:1} sill holds. Therefore we can proceed as in the proof of Thm. \ref{thm:2} and calculate commutators $[F_w,[H_l,F_v]]$ and $[F_v,
[F_w,[H_l,F_v]]]$ to see that all operators $F_v H_l F_w$ for $v,w \in \Bbb{N}$ and $l=1,\dots,N$ are elements of $\mathfrak{g}$. Strong controllability arises iff the
smallest Lie subalgebra of $\mathfrak{u}(\mathcal{H})$ containing all these operators is all of $\mathfrak{u}(\mathcal{H})$. This can still be
difficult to check, but it is at least a straightforward strategy for a proof. If the degeneracies of the eigenvalues are finite, and the projections
$F_k$ therefore finite dimensional, many methods from finite dimensions become applicable and the controllability proof therefore turns into a
family of finite dimensional problems.

If the eigenvalues of $H_0$ fail to be independent the situation becomes more difficult. We have to look at the strongly closed subgroup
$\mathcal{G}(H_0)$ of $\mathcal{T}(H_0)$ which is generated by the one parameter group $t \mapsto \exp(itH_0)$. This group defines its own Lie
algebra $\mathfrak{g}(H_0) \subset \mathfrak{t}(\mathcal{H}_0)$, and if $\mathfrak{g}(H_0)$ contains enough elements, we can proceed effectively as
in the degenerate spectrum case and calculate commutators of the form $[A,[H_l,B]]$ with $l=1,\dots,N$ and $A,B \in \mathfrak{g}(H_0)$. We get a
subset of $\mathfrak{g}(H_0,\dots,H_N)$ which should generate all of $\mathfrak{u}(\mathcal{H})$ to ensure strong controllability. If
$\mathfrak{g}(H_0)$ contains enough finite rank elements, the controllability proof can again be reduced to a family of finite dimensional problems.

To understand which drift Hamiltonians can be treated with the strategy outlined in the last paragraph we need a detailed analysis of the closed
subgroups of the standard torus $\mathcal{T}_{\aleph_0}$. It is, however, clear that the Lie algebra $\mathfrak{g}(H_0)$ can be trivial (please check
yourself that the Hamiltonian of the harmonic oscillator belongs into this class). A possible way to proceed in that case is to look for (non-Abelian) von
Neumann algebras generated by more than one operator. This is technically more challenging but has the advantage that systems with more than one
unbounded operator are included. Hence assume $H_0, \dots, H_M$ with $M < N$ are unbounded, and $H_{M+1}, \dots, H_K$, $M < K < N$ are considered as
additional (bounded) generators for the von Neumann algebra. We define
\begin{equation}
  \mathcal{A}%
  = \{ E_0(\Delta_0), \dots, E_M(\Delta_M), H_{M+1}, \dots, H_K\,|\,\Delta_1, \dots, \Delta_M \in
  \mathfrak{B}(\Bbb{R})\}'' \, ,
\end{equation}
where $E_k(\Delta_k)$ denote the spectral projections of the selfadjoint operators $H_j$. As in the Abelian case our first question should be: What
can classification theory can tell us about $\mathcal{A}$? At the coarsest level we can look at the type. The type I case is most easy to discuss and
should therefore be considered first. Most (if not all) potential applications in quantum mechanics and quantum optics fits into this class. We are
in particular interested into candidates with a big center $\mathcal{C} = \mathcal{A} \cap \mathcal{A}'$. The latter is an Abelian von Neumann algebra
and the classification of Sect. \ref{sec:maximal-torus} applies. Let us assume that $\mathcal{C} = \mathcal{M}_{n}$ holds with $n \in \Bbb{N} \cup
\{\aleph_0\}$ and the notation introduced in Sect. \ref{sec:maximal-torus} (this fits into our focus on operators with pure point spectrum only). In that case $\mathcal{M}$ is unitarily equivalent to
a direct sum  
\begin{equation} \label{eq:21}
  \mathcal{M} \cong \bigoplus_{k=1}^n \mathcal{B}(\mathcal{H}_k) \otimes \Bbb{1}_k,\quad \mathcal{H} \cong \bigoplus_{k=1}^n \mathcal{H}_k
  \otimes \mathcal{K}_k,
\end{equation}
where $\Bbb{1}_k$ denotes the unit operator on $\mathcal{K}_k$; cf. \cite[Sects. 9.3 and 9.4] {kadison1997fundamentals}. Furthermore, $\mathcal{M}$ is
*-isomorphic to the direct sum of the $\mathcal{B}(\mathcal{H}_k)$. Hence, with the same reasoning as in Sect. \ref{sec:maximal-torus} we see that the
group of unitary elements in $\mathcal{A}$:
\begin{equation}
  \mathrm{U}(\mathcal{A}) = \{ U \in \mathcal{A}\,|\, UU^* = U^*U = \Bbb{1} \}
\end{equation}
is isomorphic as a topological group (equipped with the strong topology) to a direct product of unitary groups $\mathrm{U}(\mathcal{H}_j)$.
We can associate to $\mathrm{U}(\mathcal{A})$ in the usual way a Lie algebra $\mathfrak{u}(\mathcal{A})$ consisting of anti-selfadjoint elements of
$\mathcal{A}$. As the corresponding group, we can look at $\mathfrak{u}(A)$ as the direct sum of the $\mathfrak{u}(\mathcal{H}_k)$. Our goal is again
to find enough elements in $\mathfrak{g}(H_0, \dots, H_K) \subset \mathfrak{u}(\mathcal{A})$, such that the commutators with the remaining controls
$H_{K+1}, \dots, H_N$ yields (in a strong approximation) all of $\mathfrak{u}(\mathcal{H})$.

The tricky part of this program is the determination of $\mathfrak{g}(H_0, \dots, H_K)$. This is most easily done if all the subspaces
$\mathcal{H}_k$ are one-dimensional, but this is already covered by the Abelian discussion above. Hence the next best case arises with all the
$\mathcal{H}_k$ finite dimensional (and  Eq. (\ref{eq:21}) holds with $n = \infty$). To keep the structure of $\mathcal{A}$ easy to handle, we should
not add too many generators. There is no choice for the unbounded Hamiltonians $H_0, \dots, H_M$. But the bounded generators $H_{M+1}, \dots, H_K$
are only needed to guarantee that $\mathfrak{g}(H_0, \dots, H_K)$ contains enough elements. Hence, their number should be as small as possible such that the generated algebra has an easy structure
(like all $\mathcal{H}_j$ finite dimensional). 

A system where the program just sketched can be implemented easily is a two-level atom interacting with a cavity (described by a harmonic oscillator via a quadratic interaction 
term). This is well known as the ``Jaynes-Cummings model'' in quantum optics \cite{JC63}. A control theoretic analysis is available in a large number
of papers \cite{brockett2003controllability,rangan2004control,yuan2007controllability,bloch2010finite,yuan2007controllability,KZSH}. Our discussion
will follow in particular \cite{KZSH}. We describe the model by the Hilbert space $\mathcal{H} = \Bbb{C}^2 \otimes \mathrm{L}^2(\Bbb{R})$ and in
$\mathcal{H}$ we use the complete orthonormal system $\ket{j} \otimes \ket{m}$ where $\ket{0}$, $\ket{1} \in \Bbb{C}^2$ is the canonical basis and
$\ket{m} \in \mathrm{L}^2(\Bbb{R})$, $m \in \Bbb{N}_0$ are the Hermite functions. Often we relabel it as
\begin{equation}
  \ket{\mu; 0} = \ket{0} \otimes \ket{\mu},\quad \text{and if $\mu > 0$}\quad \ket{\mu;1} = \ket{1} \otimes \ket{\mu-1}.
\end{equation}
With the Pauli operators $\sigma_\alpha$, $\alpha=\pm,1,\dots,3$ and the ordinary creation and annihilation operators $a^*,a$ on
$\mathcal{L}^2(\Bbb{R})$ we can define the drift Hamiltonian as
\begin{equation} \label{eq:22}
  H_0 = \omega_A \sigma_3 \otimes \Bbb{1} + \omega_C \Bbb{1} \otimes a^*a + \omega_I \bigl( \sigma_+ \otimes a + \sigma_- \otimes a^*),
\end{equation}
where $\omega_A, \omega_C$ and $\omega_I$ are non-vanishing but otherwise arbitrary constants. $H_0$ is well defined and essentially selfadjoint on
the domain $D_0$ consisting of all finite linear combinations of basis vectors $\ket{\mu;j}$. The control Hamiltonians are
\begin{equation} \label{eq:23}
  H_1 = \sigma_3 \otimes \Bbb{1},  \quad H_2 = \sigma_1 \otimes \Bbb{1}.
\end{equation}
The von Neumann algebra $\mathcal{A}$ which is generated by the bounded operator $H_1$ and the spectral projections of $H_0$ can be written in terms of the subspaces 
\begin{equation}
  \mathcal{H}^{(\mu)} = \mathrm{span} \{ \ket{\mu;0}, \ket{\mu;1} \} \quad \text{if $\mu > 0$, and}\quad \mathcal{H}^{(0)} = \Bbb{C} \ket{\mu;0} \quad
  \text{if $\mu=0$ holds.}
\end{equation}
Obviously $\mathcal{H}$ coincides with the direct sum of the $\mathcal{H}^{(\mu)}$, which are at the same time invariant subspaces of the operators
$H_0$, $H_1$. It is easy to see that the von Neumann algebra $\mathcal{M}$ generated by these two Hamiltonians is given by
\begin{equation}
  \mathcal{M} = \bigoplus_{\mu=0}^\infty \mathcal{B}(\mathcal{\mathcal{H}^{(\mu)}},
\end{equation}
and with a little bit more effort we can also show \cite{KZSH} that
\begin{equation}
  \mathfrak{g}(H_0,H_1) = \bigoplus_{\mu=1}^\infty \mathfrak{su}(\mathcal{H}^{(\mu)})
\end{equation}
holds, where $\mathfrak{su}(\mathcal{H}^{(\mu)})$ denotes the Lie algebra of all trace-free, anti-selfadjoint operators on $\mathcal{H}^{(\mu)}$.
Hence, since all $\mathcal{H}^{(\mu)}$ with $\mu < 0$ are two dimensional we get an infinite direct sum of $\mathfrak{su}(2)$ Lie algebras. A useful set of generators 
consists of the operators $\sigma^{(\mu)}_\alpha$, $\alpha = 1,\dots,3$, $\mu \in \Bbb{N}$, which coincide with Pauli matrices $\sigma_\alpha$ on
$\mathcal{H}^{(\mu)}$ (if we identify $\mathcal{H}^{(\mu)}$ with $\Bbb{C}^2$ via the isomorphism $\Bbb{C}^2 \ni \ket{j} \mapsto \ket{\mu;j} \in
\mathcal{H}^{(\mu)}$) and are zero everywhere else (i.e. on all $\mathcal{H}^{\nu}$ with $\nu \neq \mu$). Now, strong controllability of the control
system (\ref{eq:1}) with $H_0$, $H_1$ and $H_2$ from Eqs. (\ref{eq:22}) and (\ref{eq:23}) can be easily proved in terms of commutators between the
$\sigma^{(\mu)}_\alpha$ and the one remaining control Hamiltonian $H_2$. The corresponding calculations are very similar to those given in the proof of Thm. \ref{thm:2}. The only difference is that
all generators are traceless and therefore, the projections onto the subspaces $\mathcal{H}^{(\mu)}$ can not be generated algebraically. This problem can be solved by looking at operators of the form
$\kb{\mu;j} - \kb{\nu;j}$ and sending $\nu$ to infinity. The corresponding sequence converges strongly to $\kb{\mu;j}$. Hence the latter has to be an element of the strongly closed Lie algebra
$\mathfrak{g}$. To work out the details is left as an exercise to the reader.

\section{Outlook}
\label{sec:outlook}

Within our assumption about the Hamiltonians $H_0, \cdots, H_N$, we have seen that an approximate version of LARC can be used as a test for controllability of infinite dimensional control
problems. Compared to finite dimensions, two new, technical tasks arise: Firstly we have to calculate strong closures. This is often quite easy, as we have seen in the proof of
Thm. \ref{thm:2}. Secondly, we need a detailed spectral analysis of the drift Hamiltonian $H_0$. This is much harder and even if all eigenvalues of $H_0$ can be determined, it might happen that the
strongly closed group $\mathcal{G}(H_0)$ generated by the $\exp(itH_0)$ does not admit a non-trivial Lie algebra and in that case the scheme developed so far fails. Nevertheless, even in its current
shape the proposed methods provide already powerful tools to handle controllability problems, which complement other techniques like Galerkin approximations
\cite{boscain2012weak,boscain2014multi,caponigro2018exact} and the other methods mentioned in the introduction. Furthermore, the proposed procedure can be generalized to cover more cases. Some of the
more straightforward ideas are already sketched in Section \ref{sec:generalizations}. Along that lines a rather comprehensive treatment of systems with a pure point spectrum drift will be
possible. More challenging is of course the continuous spectrum case. Here the recurrence arguments from Sect. \ref{sec:recurrence} do not apply and systems can occur where the reachable set
$\mathcal{R}$ is not a group. The study of the dynamical group $\mathcal{G}$ and its Lie algebra is of limited use in that case, and new ideas has to be developed. This should, however, not be
considered as a hurdle. If we take the challenge we can expect insight into qualitative new behavior of quantum control systems, which is not possible in finite dimensions.

\section*{Acknowledgments}

I would like to thank all the people who helped with this research and with the writing of the manuscript. This includes in particular: G. Dirr and T. Schulte-Herbrüggen for many useful discussions
and a careful reading of this document; the organizers of the rQUACO meeting in Besançon, September 24-26, 2018 for inviting me to this nice workshop; the participants of it for more discussions about 
the topic of this paper; and  last but not least K. H. Neeb for providing the information in \cite{neebPC}.
\printbibliography

\end{document}